\documentclass[a4paper,onecolumn,11pt]{quantumarticle}
\pdfoutput=1
\usepackage[utf8]{inputenc}
\usepackage[margin=1in]{geometry}
\usepackage[english]{babel}
\usepackage[T1]{fontenc}
\usepackage{amsmath}
\usepackage{amssymb}
\usepackage{amsthm}
\usepackage{xcolor}
\usepackage[numbers,sort&compress]{natbib}
\usepackage{hyperref}

\usepackage{tikz}
\usepackage{lipsum}

\newtheorem{theorem}{Theorem}

\newtheorem{corollary}{Corollary}
\newtheorem{lemma}{Lemma}
\newtheorem{remark}{Remark}
\newtheorem{heuristic}{Heuristic}

\begin{document}

\title{Quantum Fast-Forwarding Beyond Reversibility: 
The $\alpha$-Perturbed $n$-Cycle}
\author{}
\date{}
\maketitle

\begin{center}
\begin{tabular}{ccc}
Avah Banerjee & Asim Sharma & Sooraj Soman \\
semiqlassical, Inc. & Missouri S\&T  & Missouri S\&T \\ 
Missouri S\&T &  \  & \ 
\end{tabular}
\end{center}

\begin{abstract}
Quantum fast-forwarding (QFF) is usually formulated for reversible Markov chains, where the projected quantum walk evolution is exactly governed by Chebyshev polynomials of a Hermitian discriminant matrix. We study whether this framework can be extended to nonreversible dynamics for an $\alpha$-perturbed $n$-cycle Markov chain, which preserves circulant structure while introducing controlled irreversibility.

We show that the nonreversible case has a fundamental obstruction: for $\alpha \neq 0$, the eigenvalues of $P_\alpha$ leave the interval $[-1,1]$, so $T_m(P_\alpha)$ is not uniformly bounded and cannot arise as an exact unitary compression for all times. Thus, exact Chebyshev-based QFF does not extend directly beyond reversibility.

Nevertheless, we obtain a finite-time approximation result using truncated Chebyshev and LCU techniques. The evolution $P_\alpha^t$ can be approximated with degree
$\tau=O\left(|\alpha|t+\sqrt{t\log(t/\eta)}\right),$
which recovers the reversible $O(\sqrt t)$ behavior only in the perturbative regime $|\alpha|=O(t^{-1/2})$. This identifies a nearly reversible regime where QFF survives perturbatively and quantifies how irreversibility degrades the speedup.

\end{abstract}

\section{Introduction}

Quantum walks provide a fundamental mechanism through which quantum algorithms achieve speedups over classical algorithms. They have been used in search, sampling, and graph algorithms, often providing provable advantages in hitting times and mixing behavior.

A particularly powerful development is quantum fast-forwarding (QFF), introduced by Apers \emph{et al.}, which allows simulation of a classical Markov chain for time $t$ using only $O(\sqrt{t})$ quantum-walk steps.

However, existing results rely on the reversibility of the Markov chain. In many physical and algorithmic systems, the underlying dynamics are not reversible. This motivates the central question of this work:

\begin{quote}
Does quantum fast-forwarding persist for nearly reversible Markov chains?
\end{quote}
\paragraph{Problem Statement.}
Given a nonreversible Markov chain with transition matrix $P$ and eigenvalues $\lambda_k \in\mathbb{C}$, can quantum fast-forwarding approximate $P^t$ using $o(t)$ quantum-walk steps, and how does the complexity depend on the degree of irreversibility?

\paragraph{Our Contributions.}

\begin{itemize}
\item We give the full Fourier spectral decomposition of the biased cycle and separate the odd-$n$ mixing case from the even-$n$ periodic case.
\item We show that the actual Szegedy spectral mapping for the irreversible cycle depends on the singular values $|\lambda_k|$, not the complex eigenvalues $\lambda_k$ themselves.
\item We prove a no-go theorem: for $\alpha\neq0$, no unitary walk compression can realize $T_m(P_\alpha)$ for all $m$.
\item We replace the global $t'(\alpha)$ claim by a corrected truncated-Chebyshev/LCU theorem requiring degree $O(|\alpha|t+\sqrt{t\log(t/\eta)})$.
\end{itemize}

\section{Previous and Related Works}
Quantum walks provide a systematic way to obtain quantum algorithms from classical random-walk and Markov-chain processes. A foundational construction was introduced by Szegedy, who quantized Markov-chain-based algorithms using a product of reflections on an enlarged Hilbert space~\cite{szegedy2004quantum}. In the symmetric or reversible setting, the resulting quantum walk has a spectral structure closely related to that of the original Markov chain, which allows classical hitting-time behavior to be translated into quantum-walk dynamics~\cite{szegedy2004quantum}.

This product-of-reflections viewpoint was developed further in the context of quantum-walk search. Ambainis, Gily\'en, Jeffery, and Kokainis studied the problem of finding marked vertices by quantum walks and showed that, under broad conditions, marked vertices can be found with a quadratic speedup over the corresponding classical random walk~\cite{ambainis2020quadratic}. Apers, Gily\'en, and Jeffery later gave a unified framework for quantum-walk search, connecting hitting-time, electric-network, and MNRS-type formulations~\cite{apers2019unified}. These works are relevant here because they show how the spectral structure of a Markov chain can be converted into quantum algorithmic speedups.

A major development beyond search was quantum fast-forwarding (QFF), introduced by Apers and Sarlette~\cite{apers2018quantum}. Unlike earlier quantum-walk algorithms that mainly speed up limiting behavior such as hitting or mixing, QFF accelerates the transient evolution of a reversible Markov chain. Their method uses a Szegedy-type walk, Chebyshev polynomial approximation, and linear combinations of unitaries to approximate many classical Markov-chain steps using roughly the square root of the corresponding number of quantum-walk steps~\cite{apers2018quantum}. This made QFF a natural tool for algorithms whose complexity depends directly on the runtime of an underlying random walk.

Several later works use QFF or closely related quantum walk techniques as algorithmic primitives. Apers applied QFF to graph expansion testing and seed-set methods, improving the random-walk component of graph property testers~\cite{apers2020expansion,apers2019seedsets}. Li and Shang combined generalized interpolated quantum walks with QFF to improve search algorithms for graphs with a single marked vertex~\cite{li2023improvement}. Quantum algorithms for Markov-chain mixing and sampling have also been studied in related settings, including analog quantum algorithms for mixing Markov chains and quantum sampling from target distributions~\cite{chakraborty2020analog,bencivenga2021quantum}. These works show that quantum walks and fast-forwarding ideas are useful not only for abstract Markov-chain simulation but also for search, sampling, and graph algorithms.

Recent work  has clarified both the reach and the limitations of these ideas. Sorci studied average mixing for Szegedy quantum walks associated with reversible Markov chains, giving a framework that compares the limiting quantum walk distribution with the chain being quantized~\cite{sorci2025average}. Apers and Miclo revisited the connection between quantum walks, discrete wave equations, and Chebyshev polynomials, emphasizing that the Chebyshev description is most transparent when the transition matrix has the appropriate real spectral structure~\cite{apers2024discretewave}. These developments reinforce the role of Chebyshev polynomials in reversible quantum-walk dynamics.

The most directly related post-2021 development is the work of Claudon, Piquemal, and Monmarch\'e on quantum speedups for nonreversible Markov chains~\cite{claudon2025quantum}. Their work addresses stationary sampling for nonreversible chains using modern quantum algorithmic tools and reversibilization ideas. The present paper studies a different, finite-time question: whether the Chebyshev compression underlying QFF can simulate the transient power $P_\alpha^t$ when the transition matrix has complex eigenvalues. Thus, our biased-cycle example complements that work by isolating a concrete obstruction to transplanting reversible QFF directly to a nonreversible transition matrix.

Another closely related direction is quantum matrix powering. Since Markov-chain evolution is the application of a high power of a transition matrix to an initial distribution, QFF can be viewed as a structured quantum algorithm for matrix powers. Gonz\'alez, Trivedi, and Cirac studied quantum algorithms for powering stable Hermitian matrices and connected their work to the fast-forwarding construction of Apers and Sarlette~\cite{gonzalez2021quantum}. More recently, Low and Su introduced quantum eigenvalue processing for polynomial transformations of eigenvalues of non-normal block-encoded matrices, using Faber-polynomial ideas that generalize Chebyshev approximation to complex domains~\cite{low2026quantum}. This perspective is important for the present work because reversible QFF relies heavily on real or Hermitian spectral structure, whereas nonreversible Markov chains generally lead to non-Hermitian transition matrices with complex eigenvalues.

The limitation of existing QFF methods is that they fundamentally assume reversibility. In the reversible case, the relevant discriminant matrix has spectrum in the real interval on which Chebyshev polynomials are well behaved. For a nonreversible Markov chain, the transition matrix can have complex eigenvalues, and the same polynomial mechanism may become unstable. Thus, it is not clear whether quantum fast-forwarding continues to provide a speedup once detailed balance is broken.

This work studies that question using an $\alpha$-perturbed random walk on the $n$-cycle. The model is chosen because it is a minimal nonreversible perturbation of the reversible cycle: it introduces a directional bias while preserving enough circulant structure to allow explicit spectral analysis. For nonzero perturbation, the eigenvalues move off the real interval, creating exactly the obstruction that is absent in the reversible QFF theory. Our goal is therefore to understand how far the QFF mechanism can be pushed beyond reversibility, what fails in the direct Szegedy--Chebyshev approach, and in what nearly reversible regime a fast-forwarding-type approximation can still survive.

\section{Preliminaries}

\subsection{Markov Chains}

Let $P \in \mathbb{R}^{n \times n}$ be the transition matrix of a discrete-time Markov chain on a finite state space $\{1,2,\dots,n\}$. The entry $P_{ij}$ represents the probability of transitioning from state $i$ to state $j$, and the rows of $P$ satisfy

\begin{align*}
\sum_{j=1}^{n} P_{ij} = 1, \qquad P_{ij} \ge 0.
\end{align*}

Given an initial probability distribution $p^{(0)}$, the distribution of the system after $t$ steps evolves according to

\begin{align*}
p^{(t)} = P^t p^{(0)}.
\end{align*}

\subsection{Stationary Distribution}

A probability distribution $\pi$ is called a stationary distribution of the Markov chain if it satisfies

\begin{align*}
\pi P = \pi.
\end{align*}

Intuitively, if the Markov chain is initialized in the stationary distribution, its distribution remains unchanged under the dynamics of the chain.

\subsection{Reversible Markov Chains}

A Markov chain is said to be reversible with respect to a stationary distribution $\pi$ if the detailed balance condition holds:

\begin{align*}
\pi_i P_{ij} = \pi_j P_{ji}
\end{align*}

for all states $i$ and $j$.

Reversibility implies that the Markov chain satisfies a form of time symmetry, and it plays a central role in the analysis of many quantum-walk-based algorithms, including quantum fast-forwarding.
\section{Quantum Fast-Forwarding Framework}

Quantum fast-forwarding (QFF), introduced by Apers et al.~\cite{apers2018quantum}, provides a quantum algorithmic framework for simulating the evolution of reversible Markov chains faster than classical methods. In particular, for a reversible transition matrix $P$, QFF enables the simulation of the classical evolution $P^t$ using only $O(\sqrt{t})$ applications of an associated quantum-walk operator.

The construction begins by embedding the classical transition matrix $P$ into a unitary quantum walk acting on an enlarged Hilbert space. This walk is built from reflections determined by the transition probabilities of the Markov chain and is closely related to Szegedy's quantum walk framework~\cite{szegedy2004quantum}. A central ingredient in the construction is the discriminant matrix
\[
    D = \sqrt{P\circ P^T},
\]
where $\circ$ is elementwise multiplication. When the chain is reversible, $D$ is Hermitian and has spectrum contained in $[-1,1]$.

This Hermitian structure leads to the key spectral mapping used in QFF. If $\lambda_k$ is an eigenvalue of $D$, then there exists an angle $\theta_k$ such that
\[
    \lambda_k = \cos(\theta_k),
\]
and the corresponding eigenvalues of the quantum-walk operator are
\[
    e^{\pm i\theta_k}.
\]
This relation allows the projected quantum walk evolution to be expressed in terms of Chebyshev polynomials. Since
\[
    T_m(\lambda_k) = \cos(m\theta_k),
\]
powers of the quantum walk can be used to implement polynomial approximations to the classical evolution function $\lambda^t$. This is the mechanism behind the quadratic speedup of QFF for reversible Markov chains.

The implication $\lambda_k=\cos(\theta_k)$, however, relies crucially on reversibility, or equivalently on the Hermitian nature of the discriminant matrix $D$. For nonreversible Markov chains, the transition matrix $P$ may have complex eigenvalues, and the associated discriminant-type matrix is generally non-Hermitian. Although the Szegedy product-of-reflections construction remains unitary, its eigenphases are not determined by the complex eigenvalues of $P$ through the relation $\lambda_k=\cos(\theta_k)$. Instead, the unitary walk construction is governed by singular-value information of the discriminant-type operator.

This distinction is the source of the obstruction studied in this work. In the nonreversible setting, Chebyshev polynomials applied directly to $P$ or to a non-Hermitian discriminant-type matrix need not remain uniformly bounded. Therefore, the exact Chebyshev compression that underlies reversible QFF does not extend directly to nonreversible Markov chains.

\section{$\alpha$-Perturbed $n$-Cycle Markov Chain}

We consider a random walk on the cycle graph with $n$ vertices labeled 
$\{0,1,\dots,n-1\}$, where addition of indices is taken modulo $n$. 
From each vertex the walk moves either to the next vertex or to the 
previous vertex along the cycle.

\subsection{The Reversible $n$-Cycle Markov Chain}

In the unbiased case the transition probabilities are

\begin{align*}
P_{v,v+1} = \frac{1}{2}, \qquad
P_{v,v-1} = \frac{1}{2},
\end{align*}

for every vertex $v$. All other entries of $P$ are zero.

This transition matrix describes a simple random walk on the cycle. 
The stationary distribution of this chain is the uniform distribution

\begin{align*}
\pi_v = \frac{1}{n}, \quad v=0,1,\dots,n-1.
\end{align*}

To verify reversibility we check the detailed balance condition

\begin{align*}
\pi_u P_{u,v} = \pi_v P_{v,u}.
\end{align*}

For neighboring vertices $u=v-1$ we have

\begin{align*}
\pi_{v-1} P_{v-1,v}
=
\frac{1}{n}\cdot \frac{1}{2}
=
\frac{1}{n}\cdot \frac{1}{2}
=
\pi_v P_{v,v-1}.
\end{align*}

The same holds for the other direction along the cycle. Therefore the 
detailed balance condition is satisfied and the unbiased $n$-cycle 
random walk is a reversible Markov chain.

\subsection{The $\alpha$-Perturbed $n$-Cycle}

We now introduce a small bias parameter $\alpha$ and define the 
$\alpha$-perturbed transition probabilities

\begin{align*}
P_{v,v+1} = \frac{1}{2} + \alpha, \qquad
P_{v,v-1} = \frac{1}{2} - \alpha,
\end{align*}

where $|\alpha| < \frac{1}{2}$ so that the entries remain valid 
probabilities.

This perturbation introduces a directional bias in the random walk 
around the cycle. The stationary distribution of this chain remains 
uniform:

\begin{align*}
\pi_v = \frac{1}{n}.
\end{align*}

However, the detailed balance condition no longer holds. For neighboring 
vertices we obtain

\begin{align*}
\pi_{v-1} P_{v-1,v}
=
\frac{1}{n}\left(\frac{1}{2}+\alpha\right),
\end{align*}

while

\begin{align*}
\pi_v P_{v,v-1}
=
\frac{1}{n}\left(\frac{1}{2}-\alpha\right).
\end{align*}

These quantities are equal only when $\alpha = 0$. Therefore, for 
$\alpha \neq 0$ the detailed balance condition is violated and the 
$\alpha$-perturbed $n$-cycle defines a nonreversible Markov chain.

In this work, we investigate how quantum fast-forwarding behaves when 
applied to this nearly reversible but nonreversible Markov chain.

\subsection{Eigenvalues of the Transition Matrix}

The transition matrix of the $\alpha$-perturbed $n$-cycle is circulant, 
since each row is obtained from the previous row by a cyclic shift. 
Circulant matrices are diagonalized by the discrete Fourier basis~\cite{gray2006toeplitz}.

\begin{align*}
\psi_k(v) = e^{2\pi i k v / n}, \qquad k=0,1,\dots,n-1.
\end{align*}

Applying the transition matrix to $\psi_k$ gives

\begin{align*}
(P_\alpha\psi_k)(v)
&=
\left(\frac{1}{2}+\alpha\right)\psi_k(v+1)
+
\left(\frac{1}{2}-\alpha\right)\psi_k(v-1).
\end{align*}

Using the Fourier form of the eigenvectors we obtain

\begin{align*}
(P_\alpha\psi_k)(v)
&=
\left(\frac{1}{2}+\alpha\right)e^{i\theta_k}\psi_k(v)
+
\left(\frac{1}{2}-\alpha\right)e^{-i\theta_k}\psi_k(v),
\end{align*}

where

\begin{align*}
\theta_k = \frac{2\pi k}{n}.
\end{align*}

Thus the eigenvalues of the transition matrix are

\begin{align*}
\lambda_k =
\left(\frac{1}{2}+\alpha\right)e^{i\theta_k}
+
\left(\frac{1}{2}-\alpha\right)e^{-i\theta_k}.
\end{align*}

This expression simplifies to

\begin{align*}
\lambda_k =
\cos(\theta_k) - 2 i \alpha \sin(\theta_k).
\end{align*}

When $\alpha=0$ the eigenvalues reduce to

\begin{align*}
\lambda_k = \cos(\theta_k),
\end{align*}

which are purely real as expected for reversible Markov chains. For 
$\alpha \neq 0$ the eigenvalues become complex, reflecting the 
nonreversible nature of the perturbed dynamics.
Depending on whether the Fourier vector is chosen as $e^{+i\theta_k v}$ or $e^{-i\theta_k v}$, the imaginary part above changes sign.  All results below use only $|\lambda_k|$ or conjugate-paired modes, so the sign convention does not affect the conclusions.

\begin{theorem}[Classical spectral decomposition and odd/even cycle behavior]
Let $P_\alpha$ denote the biased walk on the $n$-cycle with
\[
p=\frac12+\alpha,\qquad q=\frac12-\alpha,\qquad |\alpha|<\frac12.
\]
Let
\[
|f_k\rangle=\frac1{\sqrt n}\sum_{x=0}^{n-1}e^{i\theta_k x}|x\rangle,
\qquad
\theta_k=\frac{2\pi k}{n}.
\]
Then $P_\alpha$ is normal and Fourier diagonalizable:
\[
P_\alpha |f_k\rangle=\lambda_k |f_k\rangle,
\qquad
\lambda_k=\cos\theta_k-2i\alpha\sin\theta_k,
\]
up to the sign convention noted above.  Moreover,
\[
|\lambda_k|^2
=
\cos^2\theta_k+4\alpha^2\sin^2\theta_k
=
1-(1-4\alpha^2)\sin^2\theta_k.
\]
If $n$ is odd, then $|\lambda_k|<1$ for every $k\neq0$, and the classical chain converges to the uniform distribution.  More precisely, for a starting vertex $v$,
\[
\left\|P_\alpha^t|v\rangle-\frac1n\mathbf 1\right\|_2^2
=
\frac1n\sum_{k=1}^{n-1}|\lambda_k|^{2t}.
\]
Consequently,
\[
\left\|P_\alpha^t|v\rangle-\frac1n\mathbf 1\right\|_2
\le
\sqrt{\frac{n-1}{n}}\,
\rho_*^t,
\]
where
\[
\rho_*
=
\sqrt{1-(1-4\alpha^2)\sin^2\left(\frac{\pi}{n}\right)}.
\]
If $n$ is even, then $\lambda_{n/2}=-1$, so the chain has a period-two obstruction and does not converge to the uniform distribution.
\end{theorem}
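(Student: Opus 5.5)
The plan is to verify the Fourier diagonalization directly and then read off the remaining claims from Parseval. First, $P_\alpha$ is circulant, since its entries depend only on $y-x \bmod n$. Any two circulant matrices commute, and $P_\alpha^T$ is again circulant, so $P_\alpha P_\alpha^T = P_\alpha^T P_\alpha$ and $P_\alpha$ is normal (it is real, so $P_\alpha^*=P_\alpha^T$). The eigenvalue computation is exactly the one carried out before the statement: $(P_\alpha f_k)(x) = p\,f_k(x+1)+q\,f_k(x-1) = (p e^{i\theta_k}+q e^{-i\theta_k}) f_k(x)$. Expanding gives $(p+q)\cos\theta_k + i(p-q)\sin\theta_k = \cos\theta_k + 2i\alpha\sin\theta_k$, which agrees with the stated $\lambda_k$ up to the sign convention already discussed. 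Then
\[
|\lambda_k|^2=\cos^2\theta_k+4\alpha^2\sin^2\theta_k=1-(1-4\alpha^2)\sin^2\theta_k .
\]

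For the convergence statement, I will expand $|v\rangle=\sum_k \langle f_k|v\rangle |f_k\rangle$, where $|\langle f_k|v\rangle|=1/\sqrt n$. The $k=0$ component is $\frac1{\sqrt n}|f_0\rangle=\frac1n\mathbf 1$, and $\lambda_0=1$, so $P_\alpha^t|v\rangle-\frac1n\mathbf 1=\sum_{k\ge1}\lambda_k^t\langle f_k|v\rangle|f_k\rangle$. Because $\{f_k\}$ is orthonormal, Parseval gives the exact identity $\frac1n\sum_{k=1}^{n-1}|\lambda_k|^{2t}$. (I read $P_\alpha^t|v\rangle$ as the matrix acting on a column vector. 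The same identity holds for $P_\alpha^T$, since its eigenvalues are the conjugates $\bar\lambda_k$ with the same moduli.)

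The only genuinely nonroutine step is the uniform bound $|\lambda_k|\le\rho_*$ for $k\ne0$ when $n$ is odd. Since $1-4\alpha^2>0$, it suffices to show $\sin^2(2\pi k/n)\ge\sin^2(\pi/n)$ for $1\le k\le n-1$. I plan to argue as follows. The quantity $\sin^2(2\pi k/n)$ depends only on the distance of $2\pi k/n$ to the nearest multiple of $\pi$, that is, on $\frac{\pi}{n}\cdot\mathrm{dist}(2k,n\mathbb Z)$. For odd $n$ and $1\le k\le n-1$, the integer $2k$ is never a multiple of $n$: it lies in $(0,2n)$, and $2k=n$ is impossible because $n$ is odd. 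Hence $\mathrm{dist}(2k,n\mathbb Z)\ge1$. This distance is also at most $n/2$, so the angle lies in $[\pi/n,\pi/2]$, where $\sin^2$ is increasing. This gives $\sin^2\theta_k\ge\sin^2(\pi/n)>0$. It follows that $|\lambda_k|\le\rho_*<1$, each of the $n-1$ terms is at most $\rho_*^{2t}$, and taking square roots yields the bound $\sqrt{(n-1)/n}\,\rho_*^t\to0$.

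For even $n$ I will take $k=n/2$. Then $\theta_k=\pi$, so $\lambda_{n/2}=\cos\pi=-1$ with no imaginary part. The mode $f_{n/2}(x)=(-1)^x/\sqrt n$ has overlap $1/\sqrt n$ with every $|v\rangle$, so its component persists with an alternating sign. Concretely, $\|P_\alpha^t|v\rangle-\frac1n\mathbf 1\|_2^2\ge 1/n$ for all $t$. Equivalently, the chain moves bipartitely between the even and odd vertices, so it is periodic and does not converge.
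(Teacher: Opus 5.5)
Your proposal is correct and follows the paper's proof. Like the paper, you use circulant/Fourier diagonalization, apply Parseval after removing the stationary $k=0$ mode, and for even $n$ use the persistent $k=n/2$ mode with $\lambda_{n/2}=-1$. Your $\mathrm{dist}(2k,n\mathbb Z)\ge1$ argument makes explicit the step $\sin^2\theta_k\ge\sin^2(\pi/n)$, which the paper compresses into ``taking the largest nontrivial modulus.'' Your sign $\cos\theta_k+2i\alpha\sin\theta_k$ is what the stated $|f_k\rangle$ actually produces, and the theorem's sign-convention caveat covers it.
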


\begin{proof}
The Fourier diagonalization follows from circulant structure.  The displayed expression for $|\lambda_k|^2$ is immediate.  If $n$ is odd, the only value of $k$ with $\sin\theta_k=0$ is $k=0$, so every nontrivial Fourier mode has modulus strictly smaller than one.  Expanding $|v\rangle$ in the Fourier basis and removing the stationary $k=0$ mode gives
\[
P_\alpha^t|v\rangle-\frac1n\mathbf 1
=\frac1{\sqrt n}\sum_{k=1}^{n-1}e^{-i\theta_kv}\lambda_k^t|f_k\rangle,
\]
which proves the exact $\ell_2$ identity.  The bound follows by taking the largest nontrivial modulus.  If $n$ is even, $k=n/2$ gives $\theta_k=\pi$ and $\lambda_k=-1$, so the corresponding Fourier component never decays.
\end{proof}
\section{Quantum Fast-Forwarding for the $\alpha$-Perturbed $n$-Cycle}

In the standard formulation of quantum fast-forwarding (QFF), the underlying Markov
chain is assumed to be reversible. Under this assumption the discriminant matrix
associated with the transition matrix is symmetric, which allows the spectral
analysis of the quantum-walk operator to be directly related to the spectrum of
the classical Markov chain.

However, many Markov chains arising in practice are not exactly reversible.
To study the robustness of QFF beyond detailed balance, we consider the notion
of nearly reversible Markov chains.

\subsection{Nearly Reversible Markov Chains}

Let $P$ be the transition matrix of a finite, irreducible Markov chain with
stationary distribution $\pi$. The time-reversed Markov chain with respect to
$\pi$ is defined as

\begin{equation}
P^{r}_{ji} = \frac{\pi_i P_{ij}}{\pi_j}.
\end{equation}

If $P = P^{r}$, the chain satisfies the detailed balance condition and is
reversible.

We say a Markov chain is $\alpha$-nearly reversible if its transition matrix can be written as
\[
P_\alpha=P_0+\alpha A,
\]

where $P_0$ is a reversible transition matrix, $\alpha$ is a scalar perturbation parameter, and $A$ is the skew-circulant nearest-neighbor perturbation satisfying
\[
A_{v,v+1}=1,
\qquad
A_{v,v-1}=-1,
\]
and all other entries are zero. Thus the irreversibility parameter is the scalar $\alpha$, while $A$ is the fixed direction-bias matrix. In this work, we study a concrete instance of such a perturbation on the $n$-cycle.

\subsection{Generalized Discriminant Matrix}
For the chain considered here, define the \emph{generalized discriminant matrix} entrywise by

\begin{align*}
D' = \sqrt{P_\alpha \circ (P_\alpha^{r})^{T}},
\end{align*}

where the square root is taken entrywise. Expanding the entries gives
\begin{align*}
(D')_{ij}
&= \sqrt{(P_\alpha)_{ij} (P_\alpha^{r})_{ji}} \\
&= \sqrt{(P_\alpha)_{ij} \frac{\pi_i}{\pi_j} (P_\alpha)_{ij}} \\
&= \sqrt{\frac{\pi_i}{\pi_j}}\, (P_\alpha)_{ij}.
\end{align*}

Let $\Pi = \mathrm{diag}(\pi)$ denote the diagonal matrix of the
stationary distribution. Using this notation we obtain the matrix form

\begin{align*}
D' = \Pi^{1/2} P_\alpha \Pi^{-1/2}.
\end{align*}

This shows that $D'$ is related to the original transition matrix
through a similarity transformation.
\subsection{Spectral Relation Between $D'$ and $P_\alpha$}

An important property of the generalized discriminant matrix is that it shares
the same spectrum as the original transition matrix. In particular,
\begin{align*}
\mathrm{Spec}(D') = \mathrm{Spec}(P_\alpha).
\end{align*}

This follows from the fact that $D'$ is obtained from $P_\alpha$ through a similarity
transformation. Indeed,
\begin{align*}
D' = \Pi^{1/2} P_\alpha \Pi^{-1/2},
\end{align*}

which implies that $D'$ and $P_\alpha$ are similar matrices and therefore have identical
eigenvalues.

This property allows us to analyze the spectral behavior of the nearly
reversible Markov chain through the generalized discriminant matrix while
retaining the eigenvalue structure of the original transition matrix $P_\alpha$.

For the biased cycle the stationary distribution is uniform, so $D'=P_\alpha$.  Thus $D'$ is generally non-Hermitian and has complex eigenvalues.  This is useful for tracking the classical spectrum, but it is not enough to inherit the reversible Chebyshev identity.  In contrast, the ordinary reversible discriminant $\sqrt{P_\alpha\circ P_\alpha^T}$ has off-diagonal weights $\sqrt{pq}$ and is symmetric, but it no longer has the same spectrum as $P_\alpha$.  This split between ``same spectrum'' and ``unitary Chebyshev walk'' is precisely the issue addressed below.

\subsection{Modified Walk Operators}

To construct a quantum walk corresponding to a nearly reversible Markov chain, we first consider a modification of the construction of Apers and Sarlette~\cite{apers2018quantum}.
Let $P_\alpha$ be the transition matrix of the $\alpha$-perturbed Markov chain and let $P_\alpha^r$ be its time reversal with respect to the stationary distribution $\pi$.

The quantum walk acts on an extended Hilbert space
\begin{align*}
\mathcal{H} = \text{span}\{ |i,j\rangle : i,j \in S \},
\end{align*}
where the first register represents the current vertex and the second
register represents the coin state.

We define two coin operators corresponding to the transition matrices
$P_\alpha$ and $P_\alpha^r$.
\begin{align*}
V |i,b\rangle = \sum_j \sqrt{(P_\alpha)_{ij}}\,|i,j\rangle,
\end{align*}

\begin{align*}
V^r |i,b\rangle = \sum_j \sqrt{(P_\alpha^r)_{ij}}\,|i,j\rangle,
\end{align*}

where $|b\rangle$ denotes a fixed reference state in the coin register.

Next, we define the swap operator
\begin{align*}
S |i,j\rangle = |j,i\rangle.
\end{align*}

Using these operators, we construct the modified quantum-walk operator

\begin{align*}
U' = V^\dagger S V^r.
\end{align*}

This operator generalizes the standard Szegedy walk. When the Markov
chain is reversible, we have $P = P^r$ and therefore $V = V^r$,
reducing the construction to the usual quantum-walk operator

\begin{align*}
U = V^\dagger S V.
\end{align*}

Finally, the fast-forwarding walk operator is obtained by combining the
walk operator with a reflection about the flat subspace. Let
$\Pi_b$ denote the projector onto the flat subspace

\begin{align*}
\mathcal{H}_b = \text{span}\{ |i\rangle \otimes |b\rangle \}.
\end{align*}

The reflection operator is defined as

\begin{align*}
R_b = 2\Pi_b - I.
\end{align*}

The resulting fast-forwarding walk operator for nearly reversible chains
is then

\begin{align*}
W' = R_b U'.
\end{align*}

When the chain is reversible, this reduces to the standard
fast-forwarding walk operator
\begin{align*}
W = R_b U.
\end{align*}
\vspace{0.5em}
\begin{lemma}
For any quantum state $|v,b\rangle$,
\begin{align*}
\Pi_b U^\prime |v,b\rangle= D^\prime |v,b\rangle
\end{align*}
\end{lemma}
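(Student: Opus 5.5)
The plan is a direct matrix-element computation. Since $\Pi_b=\sum_i |i,b\rangle\langle i,b|$, it suffices to compute the amplitudes $\langle i,b|U'|v,b\rangle=\langle i,b|V^\dagger S V^r|v,b\rangle$ for every $i$. Then $\Pi_b U'|v,b\rangle=\sum_i \langle i,b|U'|v,b\rangle\,|i,b\rangle$, and the goal is to show that these coefficients are the entries of $D'$ in the appropriate index order.

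The key steps, in order, are as follows.
\begin{enumerate}
\item Act with $V^r$ using its definition on the flat subspace:
\[
V^r|v,b\rangle=\sum_j\sqrt{(P^r_\alpha)_{vj}}\,|v,j\rangle .
\]
\item Apply the swap, which gives
\[
S V^r|v,b\rangle=\sum_j\sqrt{(P^r_\alpha)_{vj}}\,|j,v\rangle .
\]
\item Handle $V^\dagger$. Only $\langle i,b|V^\dagger=(V|i,b\rangle)^\dagger=\sum_k\sqrt{(P_\alpha)_{ik}}\,\langle i,k|$ is needed. This uses only the prescribed action of $V$ on $\mathcal H_b$, so whichever unitary extension of $V$ is chosen does not matter.
\item Take the inner product. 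Using $\langle i,k|j,v\rangle=\delta_{ij}\delta_{kv}$, the double sum collapses to
\[
\langle i,b|U'|v,b\rangle=\sqrt{(P_\alpha)_{iv}\,(P^r_\alpha)_{vi}} .
\]
By the entrywise definition $D'=\sqrt{P_\alpha\circ(P^r_\alpha)^T}$, this is exactly $(D')_{iv}$.
\end{enumerate}

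Summing over $i$ then gives $\Pi_b U'|v,b\rangle=\sum_i (D')_{iv}|i,b\rangle=(D'|v\rangle)\otimes|b\rangle$. This is the claim, with $D'|v,b\rangle$ read as $(D'\otimes I)|v,b\rangle$ on the flat subspace.

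The main obstacle is bookkeeping of index order rather than any analytic difficulty. One must check that the coefficient produced is $(D')_{iv}$ and not its transpose $(D')_{vi}$. This matches the column convention $p^{(t)}=P^tp^{(0)}$ used in the paper, but with the opposite row convention one would obtain $D'^T$ instead. I would state the convention explicitly. I would also remark that the distinction is harmless for the biased cycle: there $D'=P_\alpha$, and $P_\alpha^T=P_{-\alpha}$ differs only by the sign convention for $\alpha$ already noted before the first Theorem.

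As a secondary point, I would note that $V$ and $V^r$ are isometries on $\mathcal H_b$ because the rows of $P_\alpha$ and $P_\alpha^r$ sum to one. This ensures the unitary extensions exist, so $U'$ is a genuine unitary.
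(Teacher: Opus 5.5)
Your proposal is correct and follows essentially the paper's argument: compute the flat-subspace component of $U'|v,b\rangle$ directly from the definitions of $V$, $V^r$ and $S$ and read off the entries of $D'$. The paper additionally rewrites these entries as entries of $P_\alpha$ using the uniform stationary distribution, a step you avoid because your identification uses only the entrywise definition $D'=\sqrt{P_\alpha\circ(P^r_\alpha)^T}$. Your index bookkeeping, $\langle i,b|U'|v,b\rangle=\sqrt{(P_\alpha)_{iv}(P^r_\alpha)_{vi}}=(D')_{iv}$, is the correct one, whereas the paper's displayed coefficient of $|j,b\rangle$ in $U'|i,b\rangle$, namely $\sqrt{(P_\alpha)_{ij}(P^r_\alpha)_{ji}}$, is the transpose of what those definitions give (for the biased cycle it amounts to $P_\alpha^T=P_{-\alpha}$), so your explicit column-convention check is what makes the lemma hold exactly as stated.
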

\begin{proof}
This follows directly from the fact that, for any node $i$,
\[
U^\prime |i,b\rangle
=\sum_j \sqrt{(P_\alpha)_{ij}(P_\alpha^r)_{ji}}\,|j,b\rangle+|\psi^{\perp}\rangle.
\]
Using the definition of time reversal and the uniformity of the stationary distribution, $\pi_i=\pi_j$ for all $i,j$, gives
\[
U^\prime |i,b\rangle
=\sum_j (P_\alpha)_{ij}|j,b\rangle+|\psi^{\perp}\rangle.
\]
Projecting onto the flat subspace gives the claim.
\end{proof}
This identity suggests a direct quantum-walk procedure for preparing a state proportional to $(D^{\prime})^t|v\rangle$: apply the single-step walk $t$ times and measure with $\Pi_b$ and $I-\Pi_b$~\cite{apers2018quantum}. However, this direct approach still requires $t$ walk steps. To obtain a quadratic speedup in the reversible setting, one instead uses the walk operator $W$.

The flat-subspace definition above is convenient for perturbative calculations, but it hides a unitary-completion problem: $V^\dagger S V^r$ is naturally a compression between embedded subspaces, while a full quantum walk must be unitary on the enlarged Hilbert space.  To avoid relying on a noncanonical completion, we can use the Szegedy product-of-reflections formalism.  This gives a clean spectral decomposition and shows exactly which spectral data a unitary walk can encode.

\begin{theorem}[Szegedy spectral decomposition for the biased cycle]
Let $P_\alpha$ be the biased cycle and let $P_\alpha^r=P_{-\alpha}$ be its time reversal with respect to the uniform distribution.  Define isometries
\[
A|i\rangle
=
|i\rangle\sum_j\sqrt{(P_\alpha)_{ij}}\,|j\rangle,
\qquad
B|j\rangle
=
\sum_i\sqrt{(P_\alpha^r)_{ji}}\,|i,j\rangle.
\]
Then
\[
A^\dagger B=P_\alpha.
\]
Let
\[
\mathcal W_\alpha=(2BB^\dagger-I)(2AA^\dagger-I)
\]
be the Szegedy walk.  Since $P_\alpha$ is normal, write
\[
\lambda_k=\sigma_k e^{i\varphi_k},
\qquad
\sigma_k=|\lambda_k|.
\]
For every $k$ with $0<\sigma_k<1$, define
\[
|a_k\rangle=A(e^{i\varphi_k}|f_k\rangle),
\qquad
|b_k\rangle=B|f_k\rangle.
\]
Then $\langle a_k|b_k\rangle=\sigma_k$, the space
\[
\mathcal K_k=\operatorname{span}\{|a_k\rangle,|b_k\rangle\}
\]
is invariant under $\mathcal W_\alpha$, and the eigenvalues of $\mathcal W_\alpha$ on $\mathcal K_k$ are
\[
\boxed{
 e^{\pm 2i\arccos\sigma_k}
 =
 e^{\pm 2i\arccos|\lambda_k|}.
}
\]
One choice of corresponding eigenvectors is
\[
\boxed{
|\omega_k^\pm\rangle
=
\frac{|a_k\rangle-e^{\pm i\arccos\sigma_k}|b_k\rangle}
{\sqrt2\sin(\arccos\sigma_k)}.
}
\]
The exceptional modes with $\sigma_k=1$ are one-dimensional reflection-degenerate modes.  In particular, $k=0$ is the stationary mode, and when $n$ is even the mode $k=n/2$ gives the classical period-two obstruction.
\end{theorem}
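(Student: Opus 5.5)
The plan is to reduce everything to the two-dimensional geometry of a pair of reflections. The diagonal-commuting and normality facts from the classical spectral decomposition theorem supply the needed relations between $A$, $B$ and the Fourier basis. First I check that $A$ and $B$ are isometries and that $A^\dagger B=P_\alpha$.

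\emph{Step 1: isometries and $A^\dagger B=P_\alpha$.} The isometry property follows because $P_\alpha$ is row-stochastic. The reversal $P_\alpha^r=P_{-\alpha}$ is also row-stochastic, since $P_\alpha$ is doubly stochastic: it is circulant, or equivalently $\pi$ is uniform. For the product, compute
\[
\langle i|A^\dagger B|j\rangle=\sqrt{(P_\alpha)_{ij}(P_\alpha^r)_{ji}}.
\]
With uniform $\pi$ we have $(P_\alpha^r)_{ji}=(P_\alpha)_{ij}$, so this entry equals $(P_\alpha)_{ij}$. This is the same computation as in the flat-subspace lemma above.

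\emph{Step 2: the key relations on $\mathcal K_k$.} Normality is what makes these work. From the theorem above, $P_\alpha|f_k\rangle=\lambda_k|f_k\rangle$ and $P_\alpha^\dagger|f_k\rangle=\bar\lambda_k|f_k\rangle$, because circulants share the Fourier eigenbasis.

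The overlap is
\[
\langle a_k|b_k\rangle=e^{-i\varphi_k}\langle f_k|A^\dagger B|f_k\rangle=e^{-i\varphi_k}\lambda_k=\sigma_k.
\]
For the projections, note $BB^\dagger|b_k\rangle=|b_k\rangle$ and $AA^\dagger|a_k\rangle=|a_k\rangle$. The cross terms are
\[
BB^\dagger|a_k\rangle=e^{i\varphi_k}B P_\alpha^\dagger|f_k\rangle=e^{i\varphi_k}\bar\lambda_k|b_k\rangle=\sigma_k|b_k\rangle,
\]
\[
AA^\dagger|b_k\rangle=A P_\alpha|f_k\rangle=\lambda_k A|f_k\rangle=\sigma_k|a_k\rangle.
\]
Hence both reflections $2AA^\dagger-I$ and $2BB^\dagger-I$ map $\mathcal K_k$ into itself, and so $\mathcal W_\alpha$ leaves $\mathcal K_k$ invariant. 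When $0<\sigma_k<1$, the unit vectors $|a_k\rangle,|b_k\rangle$ are linearly independent with real overlap $\cos\theta_k'$, where $\theta_k'=\arccos\sigma_k\in(0,\pi/2)$. So $\mathcal K_k$ is genuinely two-dimensional.

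\emph{Step 3: spectrum on $\mathcal K_k$.} Inside the real plane spanned by two unit vectors at angle $\theta_k'$, the map $(2BB^\dagger-I)(2AA^\dagger-I)$ is a product of two line reflections. It is therefore a rotation by $2\theta_k'$, and its eigenvalues are $e^{\pm2i\theta_k'}$. To obtain the explicit eigenvectors, I will write the $2\times2$ matrix of $\mathcal W_\alpha$ in the basis $\{|a_k\rangle,|b_k\rangle\}$ using the relations from Step 2:
\[
(2AA^\dagger-I)|a_k\rangle=|a_k\rangle,\qquad (2AA^\dagger-I)|b_k\rangle=2\sigma_k|a_k\rangle-|b_k\rangle,
\]
and similarly for $B$. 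I then apply the matrix to $|a_k\rangle-e^{\pm i\theta_k'}|b_k\rangle$ and check that it returns $e^{\pm2i\theta_k'}$ times the same vector, using $\sigma_k=\cos\theta_k'$ and $2\cos\theta\, e^{i\theta}-1=e^{2i\theta}$.

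For the normalization, the squared norm is
\[
\||a_k\rangle-e^{\pm i\theta_k'}|b_k\rangle\|^2=2-2\sigma_k\cos\theta_k'=2\sin^2\theta_k'.
\]
This matches the stated denominator. I expect this bookkeeping step, matching the phase convention ($e^{\pm i\theta_k'}$ paired with $e^{\pm2i\theta_k'}$ and with the order of the two reflections), to be the main place where signs can go wrong. If the stated pairing comes out reversed, I will simply relabel $\pm$.

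\emph{Step 4: exceptional modes.} If $\sigma_k=1$, then $\langle a_k|b_k\rangle=1$ between unit vectors, so $|a_k\rangle=|b_k\rangle$. This vector is fixed by both reflections and spans a one-dimensional invariant space with eigenvalue $1$. By the formula $|\lambda_k|^2=1-(1-4\alpha^2)\sin^2\theta_k$ from the previous theorem, and since $|\alpha|<\tfrac12$, this happens exactly when $\sin\theta_k=0$. That means $k=0$, the stationary mode, and, when $n$ is even, $k=n/2$, where $\lambda_{n/2}=-1$ gives the period-two obstruction.
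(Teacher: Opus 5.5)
Your proposal is correct and follows essentially the paper's route: $A^\dagger B=P_\alpha$ from the uniform stationary distribution, then $\langle a_k|b_k\rangle=\sigma_k$ from the Fourier singular-vector pair of the normal matrix $P_\alpha$, then the product-of-two-reflections rotation argument. You also make explicit several points the paper leaves implicit, and these checks are all correct: the invariance of $\mathcal K_k$ via $B^\dagger A=P_\alpha^\dagger$ and $P_\alpha^\dagger|f_k\rangle=\bar\lambda_k|f_k\rangle$, the $2\times2$ verification of the eigenvector pairing and normalization, and the identification of the $\sigma_k=1$ modes as exactly those with $\sin\theta_k=0$.
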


\begin{proof}
The identity $A^\dagger B=P_\alpha$ follows by direct evaluation of matrix elements:
\[
\langle i|A^\dagger B|j\rangle
=
\sqrt{(P_\alpha)_{ij}(P_\alpha^r)_{ji}}
=
(P_\alpha)_{ij},
\]
where the last equality uses the uniform stationary distribution and the definition of time reversal.  Since $P_\alpha$ is normal and Fourier diagonalizable, its singular vectors may be chosen from the Fourier basis.  If $P_\alpha|f_k\rangle=\lambda_k|f_k\rangle=\sigma_ke^{i\varphi_k}|f_k\rangle$, then $u_k=e^{i\varphi_k}|f_k\rangle$ and $v_k=|f_k\rangle$ form a singular-vector pair.  Thus
\[
\langle a_k|b_k\rangle
=\langle u_k|A^\dagger B|v_k\rangle
=\langle u_k|P_\alpha|v_k\rangle
=\sigma_k.
\]
The product of two reflections in the plane spanned by two unit vectors with inner product $\sigma_k=\cos\vartheta_k$ is a rotation by angle $2\vartheta_k$, where $\vartheta_k=\arccos\sigma_k$.  This gives the claimed eigenvalues and eigenvectors.
\end{proof}

This theorem is the main structural correction: the unitary Szegedy walk sees $|\lambda_k|$, not the complex number $\lambda_k$ through a relation $\lambda_k=\cos\theta_k$.  The irreversible phase $\varphi_k$ appears in the singular vectors/eigenvectors, not in the eigenphase formula.  Hence the reversible Chebyshev identity cannot simply be transplanted to $P_\alpha$.

\vspace{0.5em}
\begin{lemma}[Finite-time perturbative Chebyshev estimate]
For any $t \ge 1$ and any vertex state $|v,b\rangle$,
\begin{align}
\left\|
\Pi_b (W^{\prime})^t |v,b\rangle - T_t(D^{\prime})|v,b\rangle
\right\|
\le
C\,|\alpha|\, t^2,
\end{align}

for some constant $C > 0$ independent of $t$ and $n$, provided $|\alpha|$ is sufficiently small.
\end{lemma}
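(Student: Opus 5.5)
We compare the perturbed walk with its reversible counterpart at $\alpha=0$ and propagate the difference through a telescoping sum. At $\alpha=0$ the chain is reversible, $V^r=V$, $W'=W$, and the Apers--Sarlette identity $\Pi_b W^t|v,b\rangle=T_t(D)|v,b\rangle$ holds exactly, with $D=P_0$. We write
\begin{align*}
\Pi_b (W')^t|v,b\rangle - T_t(D')|v,b\rangle
&= \Pi_b\big[(W')^t-W^t\big]|v,b\rangle - \big[T_t(D')-T_t(D)\big]|v,b\rangle ,
\end{align*}
and bound the two brackets separately. Each should contribute $O(|\alpha| t^2)$.

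\textbf{Walk term.} We use the standard telescoping identity
\[
(W')^t-W^t=\sum_{s=0}^{t-1}(W')^{t-1-s}(W'-W)W^s .
\]
Whenever $W'$ and $W$ are contractions, this gives $\|(W')^t-W^t\|\le t\,\|W'-W\|$. To bound $\|W'-W\|$, note that the coin amplitudes $\sqrt{1/2\pm\alpha}$ differ from $\sqrt{1/2}$ by $O(|\alpha|)$ uniformly in $n$ once $|\alpha|\le 1/4$. Hence $\|V-V_0\|$ and $\|V^r-V_0\|$ are both $O(|\alpha|)$, since each acts locally on a single row, and therefore $\|U'-U\|=O(|\alpha|)$. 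This yields a walk-term contribution of $O(|\alpha|t)$, which is already within $O(|\alpha|t^2)$.

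The obstacle is that $U'=V^\dagger S V^r$ is not obviously unitary, or even contractive, on the whole space. The paper itself flags this as the unitary-completion problem. I would first try to show $\|U'\|\le 1$ directly, using that $V$ and $V^r$ are isometries on the flat subspace and fixing their action on its complement to be the same completion as in the reversible case. If that fails, I would accept $\|W'\|\le 1+c|\alpha|$. The telescoping bound then gives $t(1+c|\alpha|)^{t}\,O(|\alpha|)$, and this remains $O(|\alpha|t)$ only for $|\alpha|t=O(1)$. The "$|\alpha|$ sufficiently small" hypothesis would have to be read as $|\alpha|\lesssim 1/t$. At this step the claimed constant independent of $t$ may genuinely require such a restriction, and I would state it explicitly.

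\textbf{Chebyshev term.} Here $D'=P_\alpha$ and $D=P_0$ commute, since both are circulant, and share the Fourier eigenbasis $|f_k\rangle$. Therefore
\[
\|T_t(D')-T_t(D)\|=\max_k|T_t(\lambda_k)-T_t(\cos\theta_k)|.
\]
We have $|\lambda_k-\cos\theta_k|=2|\alpha||\sin\theta_k|$. By the mean value inequality along the segment joining the two points,
\[
|T_t(\lambda_k)-T_t(\cos\theta_k)|\le 2|\alpha|\sup|T_t'|,
\]
with the supremum taken over a complex neighbourhood of $[-1,1]$ of width $2|\alpha|$. On $[-1,1]$ the Markov inequality gives $|T_t'|\le t^2$. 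On the thin ellipse of width $O(|\alpha|)$, Bernstein's growth estimate $|T_t(z)|\le \rho^t$, with $\rho=1+O(|\alpha|)$, combined with Cauchy's estimate, keeps $|T_t'|=O(t^2)$ as long as $|\alpha| t=O(1)$. This step produces the genuine $|\alpha|t^2$ scaling.

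Adding the two bounds gives the constant $C$, which is independent of $n$ because every estimate is uniform over the Fourier modes and over the local coin structure.
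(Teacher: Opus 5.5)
Your decomposition and telescoping are exactly the paper's. The real difference is the Chebyshev term, and there your route is the valid one. The paper bounds $\|T_t(D')-T_t(D)\|$ by $\sup_{x\in[-1,1]}|T_t'(x)|\cdot\|D'-D\|_F$. That fails twice: the segments from $\cos\theta_k$ to $\lambda_k$ leave the real interval (the paper concedes this right after its proof), and $\|D'-D\|_F=\|\alpha A\|_F=\sqrt{2n}\,|\alpha|$ is not $n$-independent. Your simultaneous Fourier diagonalization gives an operator-norm bound with $|\lambda_k-\cos\theta_k|\le 2|\alpha|$ and a genuinely $n$-free constant.

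Two points to tighten this step:
\begin{itemize}
\item Use that each segment $[\cos\theta_k,\lambda_k]$ lies in the Bernstein ellipse $E_{\rho_\alpha}$, not merely in a strip of width $2|\alpha|$. This holds because the $\lambda_k$ lie on $x^2+y^2/(4\alpha^2)=1$, which sits inside $E_{\rho_\alpha}$, and ellipses are convex. A strip point such as $1+2i\alpha$ has Bernstein parameter $1+\Theta(\sqrt{|\alpha|})$, which would only give the regime $|\alpha|t^2=O(1)$.
\item A cleaner route than Cauchy's estimate: $T_t'=tU_{t-1}$ with $U_{t-1}$ the second-kind Chebyshev polynomial, and $U_{t-1}(z)=\sum_{j=0}^{t-1}w^{t-1-2j}$ for $z=\frac12(w+w^{-1})$. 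So $|T_t'(z)|\le t^2\rho_\alpha^{t-1}$ on $E_{\rho_\alpha}$, hence $\|T_t(D')-T_t(D)\|\le 2|\alpha|t^2e^{2|\alpha|t}$.
\end{itemize}

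The walk term needs no restriction, so your fallback $\|W'\|\le 1+c|\alpha|$ is unnecessary. Keeping ``the same completion as in the reversible case'' on the complement cannot work literally, because the image of $\mathcal H_b$ under $V$ moves with $\alpha$. Instead:
\begin{itemize}
\item Write $V_0=\sum_i|i\rangle\langle i|\otimes C_i$ with unitary $C_i$, and set $V=\sum_i|i\rangle\langle i|\otimes R_i(\alpha)C_i$.
\item Here $R_i(\alpha)$ is the real rotation of $\mathrm{span}\{|i+1\rangle,|i-1\rangle\}$ (this uses $n\ge3$) through angle $\frac12\arcsin(2|\alpha|)$, taking $(|i+1\rangle+|i-1\rangle)/\sqrt2$ to $\sqrt p\,|i+1\rangle+\sqrt q\,|i-1\rangle$.
\item Do the same for $V^r$ with $p$ and $q$ swapped.
\end{itemize}
Then $U'$ and $W'$ are exactly unitary, and $\|W'-W\|\le\|V-V_0\|+\|V^r-V_0\|\le\pi|\alpha|$ uniformly in $n$. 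The Apers--Sarlette identity holds for this $V_0$, since $V_0^\dagger S V_0$ is a reflection. So the walk term is at most $\pi|\alpha|t$ for every $t$.

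Do not hedge about $|\alpha|t=O(1)$: it comes from the Chebyshev term and is necessary. The lemma as literally stated, with a threshold on $|\alpha|$ and a constant $C$ both independent of $t$, is false. Take $n=4$ and any fixed $\alpha\neq0$. Then $\lambda_1=-2i\alpha$ and $|T_t(\lambda_1)|\ge\frac12(\rho_\alpha^t-\rho_\alpha^{-t})$, so $\|T_t(D')|v,b\rangle\|\ge\frac12|T_t(\lambda_1)|$ grows exponentially. Meanwhile $\|\Pi_b(W')^t|v,b\rangle\|\le 1$ for unitary $W'$, so the left-hand side eventually exceeds $C|\alpha|t^2$ for every $C$; this is just the paper's own no-go theorem. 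With the repairs above, your argument proves $\pi|\alpha|t+2|\alpha|t^2e^{2|\alpha|t}$ for all $t$. That is $C|\alpha|t^2$ under the explicit hypothesis $|\alpha|t\le c_0$, with $C$ depending on $c_0$ but not on $n$ — more than the paper's own proof establishes.
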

\vspace{0.5em}
\begin{proof}
Using the reversible identity from Apers and Sarlette~\cite{apers2018quantum},
\begin{align*}
\Pi_b W^t |v,b\rangle = T_t(D)|v,b\rangle,
\end{align*}
where $W$ and $D$ are the walk operator and discriminant matrix for the $n$-cycle Markov chain with $\alpha=0$.
\begin{align*}
\Pi_b (W^{\prime})^t |v,b\rangle - T_t(D^{\prime})|v,b\rangle =
\Pi_b\big((W^{\prime})^t - W^t\big)|v,b\rangle
+
\big(T_t(D) - T_t(D^{\prime})\big)|v,b\rangle.
\end{align*}

Using the telescoping identity,
\begin{align*}
(W^{\prime})^t - W^t
=
\sum_{k=0}^{t-1} (W^{\prime})^k (W^{\prime} - W) W^{t-1-k},
\end{align*}
and the fact that $W$ and $W^{\prime}$ are unitary, we obtain
\begin{align*}
\|(W^{\prime})^t - W^t\|
\le
t\,\|W^{\prime} - W\|.
\end{align*}

From the construction of $W^{\prime}$,
\begin{align*}
\|W^{\prime} - W\| \le C_1 |\alpha|,
\end{align*}
and hence
\begin{align*}
\|(W^{\prime})^t - W^t\| \le C_1 |\alpha| t.
\end{align*}

Since projection does not increase norm,
\begin{align}
\left\|
\Pi_b\big((W^{\prime})^t - W^t\big)|v,b\rangle
\right\|
\le
C_1 |\alpha| t.
\end{align}
Now consider the Chebyshev perturbation term. By the mean-value theorem,
\begin{align*}
\|T_t(D^{\prime}) - T_t(D)\|
\le
\sup_{x \in [-1,1]} |T_t'(x)| \cdot \|D^{\prime} - D\|.
\end{align*}

From the standard Chebyshev derivative bound,
\begin{align*}
\sup_{x \in [-1,1]} |T_t'(x)| \le t^2.
\end{align*}

Also, by a Frobenius-norm estimate for the perturbation of the transition matrix,
\begin{align*}
\|D^{\prime} - D\|_F \le C_2 |\alpha|.
\end{align*}

Thus,
\begin{align}
\|T_t(D^{\prime}) - T_t(D)\|
\le
C_2 |\alpha| t^2.
\end{align}
Combining the two bounds,
\begin{align}
&\left\|
\Pi_b (W^{\prime})^t |v,b\rangle - T_t(D^{\prime})|v,b\rangle
\right\|\le
C_1 |\alpha| t + C_2 |\alpha| t^2.
\end{align}

For $t \ge 1$, the dominant term is $O(|\alpha|t^2)$, so
\begin{align}
\left\|
\Pi_b (W^{\prime})^t |v,b\rangle - T_t(D^{\prime})|v,b\rangle
\right\|
\le
C\,|\alpha| t^2.
\end{align}
This gives an $O(|\alpha|t^2)$ bound on the finite-time deviation from the reversible Chebyshev identity.
\end{proof}
The argument above should be treated as a finite-time perturbative estimate, not as a global Chebyshev identity.  Two technical points need to be tracked:
\begin{enumerate}
\item Since $D'=P_\alpha$ has complex spectrum, the mean-value bound on $[-1,1]$ is not directly sufficient for $T_t(D')-T_t(D)$ once $|\alpha|t$ is not small.  Chebyshev polynomials grow exponentially off the real interval.
\item The estimate is meaningful only in a regime where the perturbative error remains small, for example $|\alpha|t^2\ll 1$ for this crude bound, or under a sharper ellipse/tail analysis as in the corrected LCU theorem below.
\end{enumerate}
The following theorem shows that no global exact Chebyshev realization can hold for fixed nonzero $\alpha$.

\begin{theorem}[No exact Chebyshev compression for the irreversible cycle]
Let $n\ge3$ and let $P_\alpha$ be the $\alpha$-biased walk on the $n$-cycle with $0<|\alpha|<1/2$.  There is no unitary $W_\alpha$ and projection $\Pi$ such that
\[
\Pi W_\alpha^m\Pi=T_m(P_\alpha)
\]
for every integer $m\ge0$.
\end{theorem}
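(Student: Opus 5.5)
The proof is by contradiction, using a uniform boundedness argument. Suppose a unitary $W_\alpha$ and an orthogonal projection $\Pi$ satisfy $\Pi W_\alpha^m\Pi=T_m(P_\alpha)$ for all $m\ge0$, with $P_\alpha$ identified with an operator on the range of $\Pi$ (for instance via an isometric embedding, as with the flat subspace $\mathcal H_b$). Since $W_\alpha^m$ is unitary and $\Pi$ is a contraction, we get
\[
\|T_m(P_\alpha)\|=\|\Pi W_\alpha^m\Pi\|\le 1\qquad\text{for every } m.
\]
By the classical spectral decomposition theorem, $P_\alpha$ is normal and diagonalized by the Fourier basis, so $T_m(P_\alpha)|f_k\rangle=T_m(\lambda_k)|f_k\rangle$. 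Hence $\|T_m(P_\alpha)\|=\max_k|T_m(\lambda_k)|$. It therefore suffices to exhibit one eigenvalue $\lambda_k$ for which $|T_m(\lambda_k)|\to\infty$.

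Because $n\ge3$, we can take $k=1$ with $\theta_1=2\pi/n\in(0,\pi)$, so $\sin\theta_1\ne0$. The eigenvalue $\lambda_1=\cos\theta_1-2i\alpha\sin\theta_1$ then has imaginary part $-2\alpha\sin\theta_1\neq0$, and so $\lambda_1\notin[-1,1]$. Writing $\lambda_1=\tfrac12(z+z^{-1})$ with $|z|\ge1$ gives the standard identity $T_m(\lambda_1)=\tfrac12(z^m+z^{-m})$. The key fact is that $|z|=1$ if and only if $\lambda_1\in[-1,1]$, since $z=e^{i\phi}$ gives $\cos\phi$. So here $|z|=\rho>1$, and
\[
|T_m(\lambda_1)|\ge\tfrac12(\rho^m-\rho^{-m})\to\infty.
\]
One can make $\rho$ explicit from the Bernstein ellipse through $\lambda_1$, but only $\rho>1$ is needed. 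This contradicts the bound $\le1$, so no such $W_\alpha,\Pi$ exist, and in fact the identity must fail for all sufficiently large $m$.

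The main obstacle is justifying $\|T_m(P_\alpha)\|=\max_k|T_m(\lambda_k)|$ (or at least $\|T_m(P_\alpha)\|\ge\max_k|T_m(\lambda_k)|$) under whatever identification of $P_\alpha$ with an operator on $\operatorname{ran}\Pi$ is used. I would handle this by noting that $\|X\|\ge$ spectral radius of $X$ for any operator $X$. This holds without normality and is invariant under isometric embedding, so it covers any reasonable identification. The remaining care is the branch choice in $\lambda=\tfrac12(z+z^{-1})$, namely choosing the root with $|z|\ge1$; it is routine.
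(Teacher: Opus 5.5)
Your proposal is correct and follows essentially the same route as the paper's proof. The ingredients match: the contraction bound $\|\Pi W_\alpha^m\Pi\|\le 1$; normality (or, as you note, just the spectral-radius bound) to reduce to $\max_k|T_m(\lambda_k)|$; and unbounded growth of $T_m$ at a non-real eigenvalue via $T_m(\lambda)=\tfrac12(z^m+z^{-m})$. Your explicit lower bound $\tfrac12(\rho^m-\rho^{-m})$ also supplies a step the paper leaves implicit: since the two roots are $z$ and $z^{-1}$, the two terms cannot cancel.
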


\begin{proof}
If $W_\alpha$ is unitary and $\Pi$ is a projection, then
\[
\|\Pi W_\alpha^m\Pi\|\le1
\]
for every $m$.  On the other hand, $P_\alpha$ is normal and diagonalized by the Fourier basis, so
\[
\|T_m(P_\alpha)\|=\max_k |T_m(\lambda_k)|.
\]
For $\alpha\neq0$ and $n\ge3$, there exists $k$ with $\sin\theta_k\neq0$, so
\[
\lambda_k=\cos\theta_k-2i\alpha\sin\theta_k
\]
is non-real and hence does not lie in $[-1,1]$.  For any $z\notin[-1,1]$,
\[
T_m(z)
=\frac12\left[
\left(z+\sqrt{z^2-1}\right)^m
+
\left(z-\sqrt{z^2-1}\right)^m
\right],
\]
using the standard closed form for Chebyshev polynomials~\cite{mason2002chebyshev}. One of the two factors has modulus strictly larger than one, so $|T_m(z)|$ is unbounded as $m\to\infty$.  Applying this to $z=\lambda_k$ gives $\|T_m(P_\alpha)\|>1$ for some $m$, contradicting $\|\Pi W_\alpha^m\Pi\|\le1$.
\end{proof}

\begin{corollary}[Single Chebyshev iterates do not approach the odd-cycle uniform distribution]
Assume $0<|\alpha|<1/2$ and $n$ is odd.  If one uses the formal Chebyshev iterate $T_m(P_\alpha)|v\rangle$ as the projected output, then
\[
\left\|
T_m(P_\alpha)|v\rangle-\frac1n\mathbf 1
\right\|_2^2
=
\frac1n\sum_{k=1}^{n-1}|T_m(\lambda_k)|^2.
\]
Moreover, there exists a constant $c_{\alpha,n}>0$ such that
\[
\left\|
T_m(P_\alpha)|v\rangle-\frac1n\mathbf 1
\right\|_2\ge c_{\alpha,n}
\qquad\text{for all }m\ge0.
\]
Thus the single-Chebyshev-time ansatz cannot get arbitrarily close to the uniform distribution on the odd irreversible cycle.
\end{corollary}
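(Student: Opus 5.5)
The argument has two parts: first the exact $\ell_2$ identity, then a uniform lower bound. For the identity, I will proceed as in the proof of the classical spectral decomposition theorem. Expand $|v\rangle=\frac1{\sqrt n}\sum_{k}e^{-i\theta_k v}|f_k\rangle$. Since $P_\alpha$ is normal and Fourier diagonalizable, $T_m(P_\alpha)|f_k\rangle=T_m(\lambda_k)|f_k\rangle$. The $k=0$ mode has $\lambda_0=1$ and $T_m(1)=1$, so it contributes exactly $\frac1{\sqrt n}|f_0\rangle=\frac1n\mathbf 1$. Subtracting this mode leaves $\frac1{\sqrt n}\sum_{k\ge1}e^{-i\theta_kv}T_m(\lambda_k)|f_k\rangle$. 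Orthonormality of the $|f_k\rangle$ then gives the stated sum $\frac1n\sum_{k=1}^{n-1}|T_m(\lambda_k)|^2$.

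For the lower bound, it suffices to find one fixed $k\neq0$ with $\inf_{m\ge0}|T_m(\lambda_k)|>0$. Then $c_{\alpha,n}=n^{-1/2}\inf_m|T_m(\lambda_k)|$ works. Take $k=1$. Because $n$ is odd and $n\ge3$, $\sin\theta_1\neq0$, so $z=\lambda_1$ is non-real, as in the proof of the no-exact-compression theorem.

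Write $z=\frac12(w+w^{-1})$ with $w=z+\sqrt{z^2-1}$, choosing the branch with $|w|=R>1$. Such a branch exists because $z\notin[-1,1]$. Then
\[
T_m(z)=\tfrac12\left(w^m+w^{-m}\right).
\]
By the reverse triangle inequality, $|T_m(z)|\ge\frac12(R^m-R^{-m})$. This lower bound is positive for every $m\ge1$ and tends to infinity. For $m=0$ we have $T_0=1$ directly.

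The remaining concern is whether $T_m(z)$ could vanish for some small $m\ge1$. It cannot. All zeros of $T_m$ are real and lie in $(-1,1)$, while $z$ is non-real, so $T_m(z)\ne0$ for every $m$. This also follows from $R^m>R^{-m}$ in the bound above.

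Finally, the sequence $|T_m(z)|$ is bounded below by a divergent positive sequence for large $m$, and each of its finitely many remaining terms is nonzero. Hence its infimum over $m\ge0$ is positive, which gives the claimed constant $c_{\alpha,n}>0$.

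The main care point is choosing the branch with $R>1$ and confirming that the closed form is valid off $[-1,1]$. Everything else is routine. I do not need the parity of $n$ beyond guaranteeing a mode with $\sin\theta_k\ne0$. Oddness matters only for interpreting $\frac1n\mathbf 1$ as the classical limit.
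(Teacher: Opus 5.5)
Your proposal is correct and follows essentially the paper's route: Fourier diagonalization with the $k=0$ mode removed gives the identity, and then non-real $\lambda_k$ together with the closed form $T_m(z)=\frac12(w^m+w^{-m})$ gives a positive infimum. Your reverse-triangle bound $|T_m(\lambda_1)|\ge\frac12(R^m-R^{-m})$ makes explicit the divergence that the paper's finite-set reduction actually needs (mere unboundedness would not suffice), and because the bound is increasing in $m$ it directly gives $c_{\alpha,n}\ge n^{-1/2}\min\{1,\frac12(R-R^{-1})\}$, so your finite-set step is not even needed.
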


\begin{proof}
The exact identity follows by Fourier diagonalization and removal of the $k=0$ uniform mode.  For odd $n$, every $k\neq0$ has $\sin\theta_k\neq0$, so every nontrivial $\lambda_k$ is non-real.  Since all zeros of $T_m$ are real and lie in $(-1,1)$, $T_m(\lambda_k)\neq0$ for every $m$ and every $k\neq0$.  Hence the displayed norm is positive for each fixed $m$.  The preceding theorem's proof shows that the norm is unbounded along large $m$, so the infimum over the discrete set $m\ge0$ is attained on a finite set and is strictly positive.  This gives $c_{\alpha,n}>0$.
\end{proof}

\begin{heuristic}[Local matching of the Chebyshev time index]
For the $\alpha$-perturbed $n$-cycle Markov chain, a local small-angle matching calculation suggests an $\epsilon$-approximation to the $t$-step random-walk distribution using an effective Chebyshev index $t^\prime(\alpha)$, where
\begin{align*}
    &\epsilon =O(|\alpha| t^2),\\
    &t^\prime(\alpha) = t^{\frac{1}{2} - \frac{4}{3}\frac{\sqrt{t}}{\ln t}\alpha^2 + \left(8 + \frac{16}{3}\frac{\sqrt{t}}{\ln t}\right)\alpha^4}.
\end{align*}
\end{heuristic}
\begin{proof}
We determine the Chebyshev time index $t^\prime$ that best approximates the modified walk after $t$ steps in a local perturbative regime. Using the finite-time estimate above as motivation, we match the squared-modulus eigenvalue expressions $|\lambda_k^t|^2$ and $|T_{t^\prime}(\lambda_k)|^2$ up to second order in $\alpha$ to solve for $t^\prime$. The appendix gives the small-angle expansions

\begin{align*}
|\lambda_k^t|^2 &= 1 + (-t + 4\alpha^2 t)\theta_k^2 + \Bigg[\frac{t(3t-1)}{6} + 2\alpha^2 t(t-1)\frac{6t-4}{3}+4\alpha^4 t^2 (t-1)^2 
-\frac{4\alpha^2 t^2 (3t-2)}{3}\Bigg]\theta_k^4 + O(\theta_k^6)
\end{align*}

\begin{align*}
|T_{t'}(\lambda_k)|^2 &= 1 + \left(
-t'^2 + \frac{4}{3}\alpha^2 t'^2 (2t'^2 + 1)
\right)\theta_k^2 
+ \left(
\frac{t'^4}{3}
+ \frac{\alpha^2 t'^2 (-24t'^4 - 40t'^2 + 4)}{45}
\right)\theta_k^4 
+ O(\alpha^3, \theta_k^6)
\end{align*}

The biased walk on the $n$-cycle interpolates between two extremes. When
$\alpha = 0$ the walk is fully reversible and the Chebyshev construction achieves the well-known
quantum speedup $t' = \sqrt{t}$. When $\alpha = \frac{1}{2}$ the chain is a deterministic shift around the cycle,
so one needs $t' = t$ steps to traverse length $t$. These two cases motivate an ansatz in which the effective time index scales as a power of $t$:

\begin{align*}
t'(\alpha) = t^{\beta(\alpha)}, \qquad \beta(0) = \frac{1}{2}, \quad \beta\left(\frac{1}{2}\right) = 1
\end{align*}

Because reversing the sign of the perturbation $\alpha \rightarrow -\alpha$ on the $n$-cycle merely flips the direction of the walk without changing the relevant moduli, we have

\begin{align*}
t'(\alpha) = t'(-\alpha) \quad \Longrightarrow \quad \beta(\alpha) = \beta(-\alpha)
\end{align*}

Since $\beta$ is an even function of $\alpha$, we expand it in a Taylor series under the assumption $|\alpha|\ll1$:

\begin{align*}
\beta(\alpha) = \frac{1}{2} + A\alpha^2 + O(\alpha^4)
\end{align*}

\begin{align*}
\beta(\alpha) = \frac{1}{2} + A\alpha^2 + B\alpha^4
\end{align*}

Imposing $\beta\left(\frac{1}{2}\right) = 1$ immediately yields

\begin{align*}
\frac{1}{2} + \frac{1}{4}A + \frac{1}{16}B = 1 \quad \Longrightarrow \quad 4A + B = 8.
\end{align*}

To determine $A$, we match the $O(\alpha^2 \theta^2)$ terms in the small-angle expansions

\begin{align*}
|\lambda|^{2t} = 1 - (t - 4\alpha^2 t)\theta^2 + O(\theta^4)
\end{align*}

\begin{align*}
|T_{t'}(\lambda)|^2 = 1 - t'\theta^2 + \frac{8}{3}\alpha^2 t'^2 \theta^2 + O(\theta^4)
\end{align*}

From the ansatz

\begin{align*}
t' = t^{\frac{1}{2} + A\alpha^2 + B\alpha^4} = \sqrt{t}\left(1 + A\alpha^2 \ln t + O(\alpha^4)\right), \quad
t'^2 = t\left(1 + 2A\alpha^2 \ln t + O(\alpha^4)\right)
\end{align*}

one finds

\begin{align*}
|T_{t'}(\lambda)|^2 = 1 - \sqrt{t}\theta^2 + \left[-A\sqrt{t}\ln t + \frac{8}{3}t\right]\alpha^2 \theta^2 + O(\theta^4, \alpha^4)
\end{align*}

Equating the coefficient of $\alpha^2 \theta^2$ to the classical value $+4t$ gives

\begin{align*}
-A\sqrt{t}\ln t + \frac{8}{3}t = 4t \quad \Longrightarrow \quad A = -\frac{4}{3}\frac{\sqrt{t}}{\ln t}
\end{align*}

Substituting this into $4A + B = 8$ determines

\begin{align*}
B = 8 - 4A = 8 + \frac{16}{3}\frac{\sqrt{t}}{\ln t}
\end{align*}

Hence the interpolating exponent is

\begin{align*}
\beta(\alpha) = \frac{1}{2} - \frac{4}{3}\frac{\sqrt{t}}{\ln t}\alpha^2 + \left(8 + \frac{16}{3}\frac{\sqrt{t}}{\ln t}\right)\alpha^4
\end{align*}

and hence

\begin{align*}
t'(\alpha) = t^{\frac{1}{2} - \frac{4}{3}\frac{\sqrt{t}}{\ln t}\alpha^2 + \left(8 + \frac{16}{3}\frac{\sqrt{t}}{\ln t}\right)\alpha^4}.
\end{align*}

This expression reproduces $\sqrt{t}$ at $\alpha = 0$ and $t$ at the deterministic endpoint $\alpha = \frac{1}{2}$, and, by construction, matches the $\alpha^2\theta^2$ term in the small-angle expansion.
\end{proof}
The preceding heuristic should be interpreted only as a local matching calculation. It matches a small-$\theta$, small-$\alpha$ expansion and does not control the Chebyshev growth caused by complex eigenvalues. The condition $\beta(1/2)=1$ also uses the deterministic endpoint $\alpha=1/2$, which lies outside the perturbative regime $|\alpha|\ll1$. 
\begin{theorem}[Corrected truncated-Chebyshev degree for the biased cycle]
Let $P_\alpha$ be the $\alpha$-biased walk on the $n$-cycle with $|\alpha|<1/2$.  Let
\[
F_{\tau,t}(z)=\sum_{\ell=0}^{\tau}p_\ell T_\ell(z),
\]
where
\[
p_\ell=\mathbb P(|X_t|=\ell),
\]
and $X_t$ denotes a length-$t$ simple symmetric random walk on $\mathbb Z$ starting at the origin.  Define
\[
s_\alpha=\operatorname{arsinh}(2|\alpha|),
\qquad
\rho_\alpha=e^{s_\alpha}=2|\alpha|+\sqrt{1+4\alpha^2}.
\]
Then
\[
\|F_{\tau,t}(P_\alpha)-P_\alpha^t\|_2
\le
\sum_{\ell>\tau}p_\ell\rho_\alpha^\ell.
\]
In particular, for any target error $\eta>0$, it is sufficient to choose
\[
\tau
\ge
s_\alpha t+
\sqrt{s_\alpha^2t^2+2t\log(2t/\eta)}
\]
to guarantee
\[
\|F_{\tau,t}(P_\alpha)-P_\alpha^t\|_2\le \eta.
\]
Thus
\[
\boxed{
\tau
=
O\!\left(|\alpha|t+\sqrt{t\log(t/\eta)}\right).
}
\]
The reversible $O(\sqrt t)$ scaling is recovered when $\alpha=0$, and it remains genuinely quadratic only in regimes such as $|\alpha|=O(t^{-1/2})$.
\end{theorem}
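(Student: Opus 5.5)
The plan has three steps: an exact Chebyshev expansion of the monomial $z^t$, a Bernstein-ellipse bound on the tail terms evaluated at the eigenvalues $\lambda_k$, and a Gaussian tail estimate on the coefficients $p_\ell$.

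\textbf{Step 1 (exact expansion).} First establish the polynomial identity
\[
z^t=\sum_{\ell=0}^{t}p_\ell T_\ell(z),\qquad p_\ell=\mathbb P(|X_t|=\ell),
\]
valid for all complex $z$. Write $z=\tfrac12(w+w^{-1})$ and expand $z^t=2^{-t}\sum_j\binom tj w^{t-2j}$. The exponent $t-2j$ is exactly the position of the symmetric walk after $t$ steps. Pairing $w^{\ell}$ with $w^{-\ell}$ and using $T_\ell(z)=\tfrac12(w^\ell+w^{-\ell})$ gives the identity.

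Since this is an identity of polynomials, it holds on all of $\mathbb C$ and hence for the matrix $P_\alpha$. Therefore
\[
P_\alpha^t-F_{\tau,t}(P_\alpha)=\sum_{\ell>\tau}p_\ell T_\ell(P_\alpha).
\]
By the classical spectral decomposition theorem, $P_\alpha$ is normal and Fourier diagonalizable. So the operator $2$-norm of this tail equals $\max_k\bigl|\sum_{\ell>\tau}p_\ell T_\ell(\lambda_k)\bigr|$, which is at most $\sum_{\ell>\tau}p_\ell\max_k|T_\ell(\lambda_k)|$.

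\textbf{Step 2 (Bernstein ellipse bound).} Next show $|T_\ell(\lambda_k)|\le\rho_\alpha^\ell$. Set $s=s_\alpha$, so that $\sinh s=2|\alpha|$. Each eigenvalue $\lambda_k=\cos\theta_k-2i\alpha\sin\theta_k$ satisfies
\[
\Bigl(\tfrac{\cos\theta_k}{\cosh s}\Bigr)^2+\Bigl(\tfrac{2\alpha\sin\theta_k}{\sinh s}\Bigr)^2\le\cos^2\theta_k+\sin^2\theta_k=1 .
\]
Hence $\lambda_k$ lies in the closed Bernstein ellipse $E_s=\{\cos(\varphi+is)\}$, whose semi-axes are $\cosh s$ and $\sinh s$. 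On the boundary of $E_s$ we have $T_\ell(\cos(\varphi+is))=\cos(\ell\varphi+i\ell s)$, whose modulus is at most $\cosh(\ell s)\le e^{\ell s}=\rho_\alpha^\ell$. The maximum modulus principle extends this bound to the interior. This yields the first claimed inequality,
\[
\|F_{\tau,t}(P_\alpha)-P_\alpha^t\|_2\le\sum_{\ell>\tau}p_\ell\rho_\alpha^\ell .
\]
I expect this step to be the main obstacle. The key is to recognize that the spectrum, although it is not itself the ellipse $\cos(\theta+is)$, is contained in that ellipse. It is also where irreversibility enters, through the exponential weight $\rho_\alpha^\ell$.

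\textbf{Step 3 (tail estimate).} For $\ell\ge1$, Hoeffding's inequality gives $p_\ell\le 2\,\mathbb P(X_t\ge\ell)\le 2e^{-\ell^2/(2t)}$. Each tail term is therefore at most $2\exp(-\ell^2/(2t)+s\ell)$. The quadratic $-\ell^2/(2t)+s\ell$ is at most $-\log(2t/\eta)$ exactly when
\[
\ell\ge st+\sqrt{s^2t^2+2t\log(2t/\eta)}.
\]
So under the stated choice of $\tau$, every term with $\ell>\tau$ is at most $\eta/t$. Since $p_\ell=0$ for $\ell>t$, at most $t$ terms are nonzero, and the sum is at most $\eta$.

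Finally, $s_\alpha=\operatorname{arsinh}(2|\alpha|)\le2|\alpha|$, and $\sqrt{a^2+b^2}\le a+b$. Together these give $\tau=O(|\alpha|t+\sqrt{t\log(t/\eta)})$. Setting $\alpha=0$ recovers the reversible $O(\sqrt t)$ scaling, and the $|\alpha|t$ term stays $O(\sqrt t)$ precisely when $|\alpha|=O(t^{-1/2})$.
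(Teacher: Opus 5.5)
Your proposal is correct and follows essentially the same route as the paper. Normality reduces the error to a maximum over the Fourier eigenvalues, the Bernstein-ellipse bound $|T_\ell(\lambda_k)|\le\rho_\alpha^\ell$ controls each tail term, and the sub-Gaussian bound on $p_\ell$ together with at most $t$ nonzero terms yields the stated threshold for $\tau$.

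Your version is slightly more careful than the paper's in two places. You derive the identity $z^t=\sum_\ell p_\ell T_\ell(z)$, and you correctly note that the eigenvalues lie \emph{inside} the ellipse with semi-axes $\cosh s_\alpha$ and $\sinh s_\alpha$, rather than on it as the paper states. The only point needing a separate word is the degenerate case $\alpha=0$: there $\sinh s_\alpha=0$, so you should instead use $|T_\ell|\le 1$ on $[-1,1]$.
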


\begin{proof}
The eigenvalues of $P_\alpha$ are
\[
\lambda_k=\cos\theta_k-2i\alpha\sin\theta_k.
\]
These points lie on the Bernstein ellipse with foci $\pm1$ and parameter $\rho_\alpha$, because the ellipse
\[
z=\frac12(w+w^{-1}),
\qquad |w|=\rho_\alpha,
\]
has imaginary semiaxis $(\rho_\alpha-\rho_\alpha^{-1})/2=2|\alpha|$.  For every $z$ in this ellipse,
\[
|T_\ell(z)|\le \rho_\alpha^\ell.
\]
Since $P_\alpha$ is normal,
\[
\|F_{\tau,t}(P_\alpha)-P_\alpha^t\|_2
=
\max_k\left|\sum_{\ell>\tau}p_\ell T_\ell(\lambda_k)\right|
\le
\sum_{\ell>\tau}p_\ell\rho_\alpha^\ell.
\]
Let $s_\alpha=\log\rho_\alpha$.  The distribution of $|X_t|$ obeys the standard sub-Gaussian bound
\[
p_\ell\le 2\exp\left(-\frac{\ell^2}{2t}\right).
\]
Therefore
\[
\sum_{\ell>\tau}p_\ell\rho_\alpha^\ell
\le
2\sum_{\ell>\tau}
\exp\left(s_\alpha\ell-\frac{\ell^2}{2t}\right).
\]
For $\ell\ge\tau\ge s_\alpha t$, the exponent is decreasing in $\ell$.  Since there are at most $t$ nonzero terms,
\[
\sum_{\ell>\tau}p_\ell\rho_\alpha^\ell
\le
2t\exp\left(s_\alpha\tau-\frac{\tau^2}{2t}\right).
\]
This is at most $\eta$ whenever
\[
\frac{\tau^2}{2t}-s_\alpha\tau\ge \log(2t/\eta),
\]
which is guaranteed by the displayed choice of $\tau$.
\end{proof}

Linear combinations of unitaries (LCU) are useful because they implement polynomial transformations of quantum-walk operators that approximate Markov-chain evolution. In the reversible case, the truncation analysis uses the bound $|T_\ell(x)|\le1$ for $x\in[-1,1]$. This bound fails in the nonreversible case. In the irreversible cycle, the relevant eigenvalues are complex:
\[
z_k=\cos\theta_k-2i\alpha\sin\theta_k.
\]
Therefore, the tail is not bounded by $p_{>t'}$ alone; the corrected tail includes an additional factor $\rho_\alpha^\ell$.

\begin{lemma}[LCU tail with complex Chebyshev growth]
With $\rho_\alpha=2|\alpha|+\sqrt{1+4\alpha^2}$, the LCU truncation error satisfies
\[
\left\|
P_\alpha^t-
\sum_{\ell=0}^{\tau}p_\ell T_\ell(P_\alpha)
\right\|_2
\le
\sum_{\ell>\tau}p_\ell\rho_\alpha^\ell.
\]
Consequently, a sufficient truncation degree is
\[
\tau
=
O\!\left(|\alpha|t+\sqrt{t\log(t/\epsilon)}\right).
\]
\end{lemma}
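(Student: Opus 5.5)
The lemma restates the preceding theorem in LCU language, so I will mostly re-derive its two ingredients and tie them to the LCU decomposition. The starting point is the reversible identity for the Chebyshev expansion of monomials. Let $X_t$ be the symmetric walk. Then for every $z\in\mathbb C$ we have $z^t=\mathbb E[T_{|X_t|}(z)]=\sum_{\ell=0}^t p_\ell T_\ell(z)$. I will prove this for $z=\cos\theta$ via $\cos^t\theta=\mathbb E[\cos(X_t\theta)]$, which is the characteristic function of $X_t$ together with $T_\ell(\cos\theta)=\cos(\ell\theta)$. It then extends to all complex $z$ because both sides are polynomials that agree on $[-1,1]$.

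Since $P_\alpha$ is normal and Fourier diagonalizable by the spectral decomposition theorem, the identity holds at the matrix level, $P_\alpha^t=\sum_{\ell\le t}p_\ell T_\ell(P_\alpha)$. The truncation error is therefore exactly $\sum_{\ell>\tau}p_\ell T_\ell(P_\alpha)$. Its $2$-norm equals $\max_k|\sum_{\ell>\tau}p_\ell T_\ell(\lambda_k)|$.

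Next I bound $|T_\ell(\lambda_k)|$. Write $\lambda_k=\tfrac12(w+w^{-1})$ with $|w|\ge1$; then $|T_\ell(\lambda_k)|=\tfrac12|w^\ell+w^{-\ell}|\le|w|^\ell$. The key geometric fact is that $\lambda_k=\cos\theta_k-2i\alpha\sin\theta_k$ lies exactly on the ellipse $\{\cos\phi-i\sinh(s)\sin\phi\}$ with $\sinh s=2|\alpha|$. One can check that $w=e^{s}e^{i\phi}$ (with the sign of $s$ matching that of $\alpha$) parametrizes this ellipse: its semiaxes are $\cosh s=\sqrt{1+4\alpha^2}$ and $\sinh s$. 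Strictly, the point lies on the ellipse $\cosh s$ along the real direction only when $\lambda_k$ is written appropriately; I expect this to be the main technical check. The clean route is to note that $|T_\ell(z)|\le\max_{|w|=\rho}|w|^\ell$ for all $z$ inside the closed Bernstein ellipse $E_\rho$, by the maximum modulus principle. Then verify $\lambda_k\in E_{\rho_\alpha}$: since $(\cos\theta_k)^2/\cosh^2s+(2\alpha\sin\theta_k)^2/\sinh^2s\le\cos^2\theta_k+\sin^2\theta_k=1$, the point lies inside. The triangle inequality then gives the claimed bound $\sum_{\ell>\tau}p_\ell\rho_\alpha^\ell$.

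For the degree, I use Hoeffding: $p_\ell\le\mathbb P(|X_t|\ge\ell)\le2e^{-\ell^2/(2t)}$. The summand is then at most $2\exp(s_\alpha\ell-\ell^2/2t)$, which is decreasing in $\ell$ once $\ell\ge s_\alpha t$, and at most $t$ terms are nonzero. So the tail is at most $2t\exp(s_\alpha\tau-\tau^2/2t)$. Solving the quadratic $\tau^2/2t-s_\alpha\tau\ge\log(2t/\epsilon)$ gives $\tau=s_\alpha t+\sqrt{s_\alpha^2t^2+2t\log(2t/\epsilon)}$. Finally, $\sqrt{a+b}\le\sqrt a+\sqrt b$ and $s_\alpha=\operatorname{arsinh}(2|\alpha|)\le2|\alpha|$ yield $\tau=O(|\alpha|t+\sqrt{t\log(t/\epsilon)})$.
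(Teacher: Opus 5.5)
Your proposal is correct and follows the paper's proof essentially step for step: the exact identity $z^t=\sum_{\ell}p_\ell T_\ell(z)$, normality of $P_\alpha$ to reduce the norm to a maximum over Fourier modes, the Bernstein-ellipse bound $|T_\ell(\lambda_k)|\le\rho_\alpha^\ell$, the Hoeffding tail $p_\ell\le 2e^{-\ell^2/(2t)}$ with the at-most-$t$-terms estimate, and solving the quadratic in $\tau$. Your containment check $\lambda_k\in E_{\rho_\alpha}$ (real semiaxis $\cosh s_\alpha\ge1$, imaginary semiaxis $\sinh s_\alpha=2|\alpha|$), combined with the maximum modulus principle, makes precise the step the paper states loosely as the eigenvalues lying \emph{on} the ellipse; keep that route and drop your tentative claim that $w=e^{s}e^{i\phi}$ parametrizes the curve through the $\lambda_k$, since that curve has real semiaxis $1$ rather than $\cosh s$.
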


\begin{proof}
This is the same estimate as in the corrected truncated-Chebyshev theorem.  The algebraic identity
\[
z^t=\sum_{\ell=0}^t p_\ell T_\ell(z)
\]
continues to hold for complex $z$, but truncating it requires the Bernstein ellipse bound $|T_\ell(z_k)|\le\rho_\alpha^\ell$.  Applying this uniformly over all Fourier modes gives the result.
\end{proof}

The LCU procedure remains meaningful, but the query degree is no longer governed only by the Gaussian width $\sqrt t$ of the coefficients $p_\ell$.  Irreversibility moves the spectrum off $[-1,1]$ and adds a drift-like cost $|\alpha|t$.  Therefore the QFF speedup is robust only when this extra cost does not dominate $\sqrt t$.
\subsection{Discussion}

The preceding result analyzes the performance of quantum fast-forwarding under small nonreversible perturbations. Since the spectrum leaves $[-1,1]$, Chebyshev polynomials acquire exponential growth controlled by
\[
\rho_\alpha=2|\alpha|+\sqrt{1+4\alpha^2}
\quad\text{or equivalently}\quad
s_\alpha=\log\rho_\alpha=\operatorname{arsinh}(2|\alpha|).
\]
The resulting degree bound is
\[
\tau=O\!\left(|\alpha|t+\sqrt{t\log(t/\eta)}\right).
\]
Thus the reversible quadratic speedup persists in a genuinely perturbative regime such as $|\alpha|=O(t^{-1/2})$, but for fixed nonzero $\alpha$ the term $|\alpha|t$ eventually dominates and the asymptotic quadratic speedup is lost.

\section{Conclusion}

In this work, we investigated the behavior of quantum fast-forwarding
(QFF) beyond the standard assumption of reversible Markov chains. 
Focusing on the $\alpha$-perturbed $n$-cycle, we analyzed how the
Szegedy and LCU-based Chebyshev frameworks behave when the transition
matrix becomes nonreversible.

The biased cycle exposes a sharp obstruction to extending reversible QFF by simply replacing the real discriminant with a non-Hermitian matrix sharing the spectrum of $P_\alpha$.  Although the classical odd cycle still converges to uniform for $|\alpha|<1/2$, the formal Chebyshev iterates $T_m(P_\alpha)$ do not converge to uniform and cannot be realized as exact compressions of a unitary walk for all $m$.  A positive LCU approximation remains available, and its truncation degree is
\[
O\!\left(|\alpha|t+\sqrt{t\log(t/\eta)}\right).
\]
This gives a precise perturbative regime in which QFF-like behavior survives and identifies the mechanism by which fixed nonzero irreversibility destroys asymptotic quadratic speedup.

\section{Future Work}

A sharper version of the present result would replace the coarse union-bound tail by an optimal weighted moderate-deviation estimate for
\[
\sum_{\ell>\tau}p_\ell\rho_\alpha^\ell.
\]
Another direction is to characterize nonreversible chains whose spectra remain inside a thin Bernstein ellipse around $[-1,1]$, because the ellipse parameter directly controls the LCU degree.  Finally, one should investigate whether alternative block-encodings, singular-value transformations, or eigenvalue-processing methods can simulate $P^t$ without trying to realize $T_m(P)$ for a non-Hermitian $P$~\cite{low2026quantum,sunderhauf2023generalized}.

\paragraph{Acknowledgments.}
This research was supported in part by the  National Science Foundation under Award No.~CCF-2246144. Any opinions, findings, conclusions, or recommendations expressed in this material are those of the authors and do not necessarily reflect the views of the National Science Foundation.

\bibliographystyle{quantum}
\bibliography{references}
\appendix

\section{Second-Order Expansion of $\lambda_k^t$}

We expand the power $\lambda_k^t = (\cos\theta_k + \Delta_k)^t$ up to second order in $\alpha$, where

\begin{align*}
\Delta_k = -2i\alpha \sin\theta_k
\end{align*}

Using the binomial expansion around $\alpha = 0$, we have:

\begin{align*}
\lambda_k^t = \cos^t\theta_k + t\cos^{t-1}\theta_k \Delta_k + \frac{t(t-1)}{2}\cos^{t-2}\theta_k \Delta_k^2 + O(\Delta_k^3)
\end{align*}

Substituting $\Delta_k = -2i\alpha \sin\theta_k$ gives:

\begin{align*}
\lambda_k^t = \cos^t\theta_k - 2i\alpha t \cos^{t-1}\theta_k \sin\theta_k - 2\alpha^2 t(t-1)\cos^{t-2}\theta_k \sin^2\theta_k + O(\alpha^3)
\end{align*}

Thus, the second-order expansion of $\lambda_k^t$ is:

\begin{align*}
\boxed{
\lambda_k^t = \cos^t\theta_k - 2i\alpha t \cos^{t-1}\theta_k \sin\theta_k - 2\alpha^2 t(t-1)\cos^{t-2}\theta_k \sin^2\theta_k + O(\alpha^3)
}
\end{align*}

\section{Second-Order Expansion of $T_{t'}(\lambda_k)$}

Let

\begin{align*}
z(\alpha) = \cos\theta_k - 2i\alpha \sin\theta_k, \qquad \alpha \in \mathbb{R},\ |\alpha| \ll 1
\end{align*}

Throughout, terms of order $O(\alpha^3)$ and higher are neglected. Now, Chebyshev polynomials of the first kind satisfy

\begin{align*}
T_t(z) = \frac{1}{2}\left[(z + \sqrt{z^2 - 1})^t + (z - \sqrt{z^2 - 1})^t \right], \qquad t \in \mathbb{Z}_{\ge 0},\ z \in \mathbb{C}
\end{align*}

We now compute the second-order expansion of this expression for the input $z = \cos\theta_k - 2i\alpha \sin\theta_k$. We begin by expanding the argument inside the square root:

\begin{align*}
z^2 - 1
&= (\cos\theta_k - 2i\alpha \sin\theta_k)^2 - 1 \\
&= \cos^2\theta_k - 4i\alpha \cos\theta_k \sin\theta_k - 4\alpha^2 \sin^2\theta_k - 1 \\
&= -\sin^2\theta_k (1 + 4i\alpha \cot\theta_k + 4\alpha^2)
\end{align*}

For $|u| \ll 1$ the expansion $\sqrt{1+u} = 1 + \frac{1}{2}u - \frac{1}{8}u^2 + O(u^3)$ applies. To compute the square root we set $u = 4i\alpha \cot\theta_k + 4\alpha^2$, then

\begin{align*}
\sqrt{z^2 - 1}
&= i\sin\theta_k \left(1 + \frac{1}{2}u - \frac{1}{8}u^2 \right) + O(\alpha^3) \\
&= i\sin\theta_k \left(1 + 2i\alpha \cot\theta_k + 2\alpha^2 - \frac{1}{8}(16\alpha^2 \cot^2\theta_k + 32i\alpha^3 \cot\theta_k + 16\alpha^4)\right) \\
&= i\sin\theta_k \left(1 + 2i\alpha \cot\theta_k + 2\alpha^2 + 2\alpha^2 \cot^2\theta_k - 4i\alpha^3 \cot\theta_k - 2\alpha^4\right) \\
&= i\sin\theta_k - 2\alpha \cos\theta_k + 2i\alpha^2\left(\sin\theta_k + \frac{\cos^2\theta_k}{\sin\theta_k}\right) + O(\alpha^3) \\
&= i\sin\theta_k - 2\alpha \cos\theta_k + \frac{2i\alpha^2}{\sin\theta_k} + O(\alpha^3)
\end{align*}

Define

\begin{align*}
w_\pm = z \pm \sqrt{z^2 - 1}
\end{align*}

Then, using the previous expansion,

\begin{align*}
w_\pm
&= \cos\theta_k - 2i\alpha \sin\theta_k \pm \left(i\sin\theta_k - 2\alpha \cos\theta_k + \frac{2i\alpha^2}{\sin\theta_k}\right) \\
&= \cos\theta_k \pm i\sin\theta_k + 2\alpha(\pm i\sin\theta_k + \cos\theta_k) \pm \frac{2i\alpha^2}{\sin\theta_k} + O(\alpha^3) \\
&= e^{\pm i\theta_k} \left(1 + 2\alpha \pm \frac{2i\alpha^2}{\sin\theta_k e^{\pm i\theta_k}}\right) + O(\alpha^3)
\end{align*}

To raise these to the power $t'$, we use the binomial approximation, for constants $a,b$:

\begin{align*}
(1 + a\alpha + b\alpha^2)^{t'} = 1 + t'a\alpha + \left[t'b + \frac{1}{2}t'(t'-1)a^2\right]\alpha^2 + O(\alpha^3)
\end{align*}

Applying this:

\begin{align*}
w_\pm^{t'} = e^{\pm it'\theta_k}\left[1 + 2t'\alpha + \left(2t'(t'-1) \pm \frac{2it'}{\sin\theta_k e^{\pm i\theta_k}}\right)\alpha^2\right] + O(\alpha^3)
\end{align*}

Combining the two powers,

\begin{align*}
T_{t'}(z(\alpha))
&= \frac{1}{2}(w_+^{t'} + w_-^{t'}) + O(\alpha^3) \\
&= \cos(t'\theta_k) - 2i\alpha t'\sin(t'\theta_k) \\
&\quad + \alpha^2\left[2t'^2\cos(t'\theta_k) - 2t'\cot\theta_k \sin(t'\theta_k)\right] + O(\alpha^3)
\end{align*}

Therefore, the second-order expansion of $T_{t'}(\lambda_k)$ is:

\begin{align*}
\boxed{
T_{t'}(\lambda_k) = \cos(t'\theta_k) - 2i\alpha t'\sin(t'\theta_k) + 2\alpha^2 t'^2\cos(t'\theta_k) - 2\alpha^2 t'\cot\theta_k \sin(t'\theta_k) + O(\alpha^3)
}
\end{align*}
\subsection{Taylor-Series Expansions of $\lambda_k^t$ and $T_{t'}(\lambda_k)$}

We now apply small-angle approximations for the trigonometric functions to derive Taylor-series expansions of $\lambda_k^t$ and $T_{t'}(\lambda_k)$ in powers of $\theta_k$ and $\alpha$. Throughout, we assume $|\theta_k| \ll 1$ and retain terms up to $O(\alpha^2, \theta_k^4)$, unless otherwise noted.

\begin{align*}
\cos\theta_k = 1 - \frac{\theta_k^2}{2} + \frac{\theta_k^4}{24} + O(\theta_k^6), 
\qquad
\sin\theta_k = \theta_k - \frac{\theta_k^3}{6} + O(\theta_k^5), 
\qquad
\cot\theta_k = \frac{1}{\theta_k} - \frac{\theta_k}{3} - \frac{\theta_k^3}{45} + O(\theta_k^5)
\end{align*}

\subsubsection{Expansion of $\lambda_k^t$}

We begin by expanding $\lambda_k^t$ using the expression:

\begin{align*}
\lambda_k^t = \cos^t\theta_k - 2i\alpha t \cos^{t-1}\theta_k \sin\theta_k - 2\alpha^2 t(t-1)\cos^{t-2}\theta_k \sin^2\theta_k + O(\alpha^3)
\end{align*}

Substituting the Taylor expansions of $\cos\theta_k$ and $\sin\theta_k$:

\begin{align*}
\lambda_k^t
&=
\left(1 - \frac{\theta_k^2}{2} + \frac{\theta_k^4}{24} + O(\theta_k^6)\right)^t \\
&\quad
- 2i\alpha t
\left(1 - \frac{\theta_k^2}{2} + \frac{\theta_k^4}{24} + O(\theta_k^6)\right)^{t-1}
\left(\theta_k - \frac{\theta_k^3}{6} + O(\theta_k^5)\right) \\
&\quad
- 2\alpha^2 t(t-1)
\left(1 - \frac{\theta_k^2}{2} + \frac{\theta_k^4}{24} + O(\theta_k^6)\right)^{t-2}
\left(\theta_k^2 - \frac{\theta_k^4}{3} + O(\theta_k^6)\right)
+ O(\alpha^3)
\end{align*}

Next, we expand the powers using binomial approximations and collect terms:

\begin{align*}
\lambda_k^t
&=
\left(1 - \frac{t}{2}\theta_k^2 + \left(\frac{t}{24} + \frac{t(t-1)}{8}\right)\theta_k^4 + O(\theta_k^6)\right) \\
&
- 2i\alpha t
\left(1 - \frac{(t-1)}{2}\theta_k^2 + \left(\frac{(t-1)}{24} + \frac{(t-1)(t-2)}{8}\right)\theta_k^4 + O(\theta_k^6)\right)
\left(\theta_k - \frac{\theta_k^3}{6} + O(\theta_k^5)\right) \\
&
- 2\alpha^2 t(t-1)
\left(1 - \frac{(t-2)}{2}\theta_k^2 + \left(\frac{(t-2)}{24} + \frac{(t-2)(t-3)}{8}\right)\theta_k^4 + O(\theta_k^6)\right)
\left(\theta_k^2 - \frac{\theta_k^4}{3} + O(\theta_k^6)\right)
+ O(\alpha^3) \\
&= 1 - 2i\alpha t \theta_k - \left(\frac{t}{2} + 2\alpha^2 t(t-1)\right)\theta_k^2 + i\alpha t \frac{3t-2}{3}\theta_k^3 \\
&
+ \left(\frac{t}{24} + \frac{t(t-1)}{8} + \alpha^2 t(t-1)(t-2) + \frac{2}{3}\alpha^2 t(t-1)\right)\theta_k^4
+ O(\alpha^3, \theta_k^5)
\end{align*}

Finally, simplifying and collecting terms:

\begin{align*}
\lambda_k^t &= 1 - 2i\alpha t \theta_k - \left(\frac{t}{2} + 2\alpha^2 t(t-1)\right)\theta_k^2 + \frac{i\alpha t}{3}(3t-2)\theta_k^3  \\ & +\left(\frac{t}{24}(3t-2) + \alpha^2 t(t-1)\frac{3t-4}{3}\right)\theta_k^4 + O(\alpha^3, \theta_k^5)
\end{align*}

\subsubsection{Expansion of $T_{t'}(\lambda_k)$}

We now derive a Taylor-series expansion for the Chebyshev polynomial $T_{t'}(\lambda_k)$:

\begin{align*}
T_{t'}(\lambda_k) = \cos(t'\theta_k) - 2i\alpha t' \sin(t'\theta_k) + 2\alpha^2 t'^2 \cos(t'\theta_k) - 2\alpha^2 t' \frac{\cos\theta_k \sin(t'\theta_k)}{\sin\theta_k} + O(\alpha^3)
\end{align*}

Using Taylor expansions of the trigonometric functions and the cotangent:

\begin{align*}
T_{t'}(\lambda_k)
&=
\left(1 - \frac{t'^2}{2}\theta_k^2 + \frac{t'^4}{24}\theta_k^4 + O(\theta_k^6)\right)
- 2i\alpha t'
\left(t'\theta_k - \frac{t'^3}{6}\theta_k^3 + O(\theta_k^5)\right) \\
&\quad
+ 2\alpha^2 t'^2
\left(1 - \frac{t'^2}{2}\theta_k^2 + \frac{t'^4}{24}\theta_k^4 + O(\theta_k^6)\right) \\
&\quad
- 2\alpha^2 t'
\left(\frac{1}{\theta_k} - \frac{\theta_k}{3} - \frac{\theta_k^3}{45} + O(\theta_k^5)\right)
\left(t'\theta_k - \frac{t'^3}{6}\theta_k^3 + \frac{t'^5}{120}\theta_k^5 + O(\theta_k^7)\right)
\end{align*}

Combining terms and simplifying:

\begin{align*}
T_{t'}(\lambda_k)
&= 1 - 2i\alpha t'^2 \theta_k - \left(\frac{t'^2}{2} + \frac{2\alpha^2 t'^2}{3}(t'^2 - 1)\right)\theta_k^2 \\
&\quad
+ \frac{i\alpha t'^4}{3}\theta_k^3
+ \left(\frac{t'^4}{24} + \frac{\alpha^2 t'^2}{45}(3t'^4 - 5t'^2 + 2)\right)\theta_k^4
+ O(\alpha^3, \theta_k^5)
\end{align*}
\section{Modulus-Squared Expansions of $\lambda_k^t$ and $T_{t'}(\lambda_k)$}

We now derive expansions for the squared moduli $|\lambda_k^t|^2$ and $|T_{t'}(\lambda_k)|^2$, using previously obtained Taylor-series approximations in powers of $\theta_k$ and $\alpha$.

\subsection{Expansion of $|\lambda_k^t|^2$}

Starting from the earlier expansion of $\lambda_k^t$, we recall:

\begin{align*}
\lambda_k^t &= 1 - 2i\alpha t \theta_k - \left(\frac{t}{2} + 2\alpha^2 t(t-1)\right)\theta_k^2 + \frac{i\alpha t}{3}(3t-2)\theta_k^3 \\ & + \left(\frac{t}{24}(3t-2) + \alpha^2 t(t-1)\frac{3t-4}{3}\right)\theta_k^4 + O(\alpha^3, \theta_k^5)
\end{align*}

To compute the modulus squared, we separate the real and imaginary parts:

\begin{align*}
\mathrm{Re}(\lambda_k^t) = 1 + \left(-\frac{t}{2} - 2\alpha^2 t(t-1)\right)\theta_k^2 + \left(\frac{t}{24}(3t-2) + \alpha^2 t(t-1)\frac{3t-4}{3}\right)\theta_k^4,
\end{align*}

\begin{align*}
\mathrm{Im}(\lambda_k^t) = -2\alpha t \theta_k + \frac{\alpha t (3t-2)}{3}\theta_k^3
\end{align*}

Squaring each component separately, we obtain:

\begin{align*}
[\mathrm{Re}(\lambda_k^t)]^2
&= \left(1 + \left(-\frac{t}{2} - 2\alpha^2 t(t-1)\right)\theta_k^2 + \left(\frac{t}{24}(3t-2) + \alpha^2 t(t-1)\frac{3t-4}{3}\right)\theta_k^4 \right)^2 + O(\theta_k^6) \\
&= 1 + 2\left(-\frac{t}{2} - 2\alpha^2 t(t-1)\right)\theta_k^2 \\
&\quad + \left[\left(-\frac{t}{2} - 2\alpha^2 t(t-1)\right)^2 + 2\left(\frac{t}{24}(3t-2) + \alpha^2 t(t-1)\frac{3t-4}{3}\right)\right]\theta_k^4 + O(\theta_k^6) \\
&= 1 + (-t - 4\alpha^2 t(t-1))\theta_k^2 \\
&\quad + \left[\frac{t^2}{4} + 2\alpha^2 t^2(t-1) + 4\alpha^4 t^2 (t-1)^2 + \frac{t}{12}(3t-2) + 2\alpha^2 t(t-1)\frac{3t-4}{3}\right]\theta_k^4 + O(\theta_k^6) \\
&= 1 + (-t - 4\alpha^2 t(t-1))\theta_k^2 \\
&\quad + \left[\frac{t(3t-1)}{6} + 2\alpha^2 t(t-1)\frac{6t-4}{3} + 4\alpha^4 t^2 (t-1)^2 \right]\theta_k^4 + O(\theta_k^6)
\end{align*}

For the imaginary part:

\begin{align*}
[\mathrm{Im}(\lambda_k^t)]^2 = 4\alpha^2 t^2 \theta_k^2 - \frac{4\alpha^2 t^2 (3t-2)}{3}\theta_k^4 + O(\theta_k^6)
\end{align*}

Adding both contributions gives the desired modulus-squared expansion:

\begin{align*}
|\lambda_k^t|^2
&= [\mathrm{Re}(\lambda_k^t)]^2 + [\mathrm{Im}(\lambda_k^t)]^2 \\
&= 1 + (-t - 4\alpha^2 t(t-1) + 4\alpha^2 t^2)\theta_k^2 \\
&\quad + \left[\frac{t(3t-1)}{6} + 2\alpha^2 t(t-1)\frac{6t-4}{3} + 4\alpha^4 t^2 (t-1)^2 - \frac{4\alpha^2 t^2 (3t-2)}{3}\right]\theta_k^4 + O(\theta_k^6)
\end{align*}

\begin{align*}
|\lambda_k^t|^2 &= 1 + (-t + 4\alpha^2 t)\theta_k^2 + \left[\frac{t(3t-1)}{6} + 2\alpha^2 t(t-1)\frac{6t-4}{3} + 4\alpha^4 t^2 (t-1)^2 - \frac{4\alpha^2 t^2 (3t-2)}{3}\right]\theta_k^4 \\ &+ O(\theta_k^6)
\end{align*}


\subsection{Expansion of $|T_{t'}(\lambda_k)|^2$}

We now compute the squared modulus of $T_{t'}(\lambda_k)$, based on the previously derived small-angle expansion:

\begin{align*}
T_{t'}(\lambda_k) &= 1 - 2i\alpha t'^2 \theta_k + \left(-\frac{t'^2}{2} - \frac{2}{3}\alpha^2 t'^2 (t'^2-1)\right)\theta_k^2 + \frac{i\alpha t'^4}{3}\theta_k^3 \\ & + \left(\frac{t'^4}{24} + \frac{\alpha^2 t'^2}{45}(3t'^4 - 5t'^2 + 2)\right)\theta_k^4 + O(\alpha^3, \theta_k^5)
\end{align*}

Separating real and imaginary parts:

\begin{align*}
\mathrm{Re}(T_{t'}(\lambda_k)) = 1 + \left(-\frac{t'^2}{2} - \frac{2}{3}\alpha^2 t'^2 (t'^2-1)\right)\theta_k^2 + \left(\frac{t'^4}{24} + \frac{\alpha^2 t'^2}{45}(3t'^4 - 5t'^2 + 2)\right)\theta_k^4,
\end{align*}

\begin{align*}
\mathrm{Im}(T_{t'}(\lambda_k)) = -2\alpha t'^2 \theta_k + \frac{\alpha t'^4}{3}\theta_k^3
\end{align*}

We first compute the square of the real part:

\begin{align*}
[\mathrm{Re}(T_{t'}(\lambda_k))]^2
&= \left(1 + \left(-\frac{t'^2}{2} - \frac{2}{3}\alpha^2 t'^2 (t'^2-1)\right)\theta_k^2 + \left(\frac{t'^4}{24} + \frac{\alpha^2 t'^2}{45}(3t'^4 - 5t'^2 + 2)\right)\theta_k^4 \right)^2 + O(\theta_k^6) \\
&= 1 + 2\left(-\frac{t'^2}{2} - \frac{2}{3}\alpha^2 t'^2 (t'^2-1)\right)\theta_k^2 \\
&\quad + \left[\left(-\frac{t'^2}{2} - \frac{2}{3}\alpha^2 t'^2 (t'^2-1)\right)^2 + 2\left(\frac{t'^4}{24} + \frac{\alpha^2 t'^2}{45}(3t'^4 - 5t'^2 + 2)\right)\right]\theta_k^4 + O(\theta_k^6)
\end{align*}

Next, squaring the imaginary part gives:

\begin{align*}
[\mathrm{Im}(T_{t'}(\lambda_k))]^2 = 4\alpha^2 t'^4 \theta_k^2 - \frac{4}{3}\alpha^2 t'^6 \theta_k^4 + O(\theta_k^6)
\end{align*}

Adding both contributions, the modulus squared becomes:

\begin{align*}
|T_{t'}(\lambda_k)|^2
&= 1 + \left(-t'^2 + \frac{4}{3}\alpha^2 t'^2 (2t'^2+1)\right)\theta_k^2 \\
&\quad + \left(\frac{t'^4}{3} + \frac{\alpha^2 t'^2 (-24t'^4 - 40t'^2 + 4)}{45}\right)\theta_k^4 + O(\alpha^3, \theta_k^6)
\end{align*}

\begin{align*}
|T_{t'}(\lambda_k)|^2 = 1 + \left(-t'^2 + \frac{4}{3}\alpha^2 t'^2 (2t'^2 + 1)\right)\theta_k^2 + \left(\frac{t'^4}{3} + \frac{\alpha^2 t'^2 (-24t'^4 - 40t'^2 + 4)}{45}\right)\theta_k^4 + O(\alpha^3, \theta_k^6)
\end{align*}

\begin{remark}[Role of the appendix expansions]
The small-angle expansions are useful as local diagnostics, but they cannot by themselves prove a global fast-forwarding theorem. The missing global input is a bound on $T_\ell(z)$ for all eigenvalues $z=\lambda_k$. The Bernstein-ellipse estimate used in the LCU theorem supplies that control.
\end{remark}

\end{document}